\newtheorem{theorem}{Theorem}  [section]
\def\BibTeX{{\rm B\kern-.05em{\sc i\kern-.025em b}\kern-.08em
    T\kern-.1667em\lower.7ex\hbox{E}\kern-.125emX}}
\begin{document}

\title{Privacy-aware Data Trading}

\author{
\IEEEauthorblockN{Shengling Wang}
\IEEEauthorblockA{\textit{\small{College of Information Science and Technology}} \\
\textit{\small{Beijing Normal University}}\\
Beijing, P.R. China \\
wangshangling@bnu.edu.cn}
\and
\IEEEauthorblockN{Lina Shi}
\IEEEauthorblockA{\textit{\small{College of Information Science and Technology}} \\
\textit{\small{Beijing Normal University}}\\
Beijing, P.R. China \\
201821210031@mail.bnu.edu.cn}
\and
\IEEEauthorblockN{Junshan Zhang}
\IEEEauthorblockA{\textit{\small{School of Electrical Computer and Energy Engineering}} \\
\textit{\small{Arizona State University}}\\
Tempe, AZ, USA \\
junshan.zhang@asu.edu}
\and
\IEEEauthorblockN{Xiuzhen Cheng}
\IEEEauthorblockA{\textit{\small{Department of Computer Science }} \\
\textit{\small{The George Washington University}}\\
Washington DC, USA \\
cheng@gwu.edu}
\and
\IEEEauthorblockN{Jiguo Yu$^*$}
\IEEEauthorblockA{\textit{\small{School of Computer Science and Technology}} \\
\textit{\small{Qilu University of Technology }}\\
Jinan, Shandong, P.R. China \\
jiguoyu@sina.com}
}

\maketitle

\begin{abstract}
The growing threat of personal data breach in data trading pinpoints
an urgent need to develop countermeasures for preserving individual
privacy. The state-of-the-art  work  either endows the data collector  with   the responsibility of data privacy or reports only
a privacy-preserving version of the data. The basic assumption of the former approach that the data collector is trustworthy  does not always hold true in reality, whereas the latter approach reduces the value of data.  In this paper, we investigate the privacy leakage issue from  the root source. Specifically, we take a fresh look to  reverse the inferior position of the data provider by making her dominate the game with the collector to solve  the dilemma in data trading.  To  that aim,  we propose the  noisy-sequentially zero-determinant (NSZD) strategies by tailoring the classical zero-determinant strategies, originally designed for the simultaneous-move game, to adapt to the noisy sequential game. NSZD strategies can empower the data provider to unilaterally set the expected payoff
of the data collector or enforce a positive relationship between  her and the data collector's expected payoffs. Both strategies can stimulate a rational data collector to behave honestly,
 boosting a healthy data trading market.
Numerical simulations are used to examine the impacts of key parameters and the  feasible region where the data provider can be an NSZD player.
Finally, we prove that  the data collector  cannot employ NSZD to further dominate the data market for deteriorating privacy leakage.

\end{abstract}

\begin{IEEEkeywords}
Data trading, privacy leakage, the zero-determinant strategies, the noisy sequential game
\end{IEEEkeywords}
\section{Introduction}\label{intr}
We are nowhere to hide in the big-data era. From a simple click on a browser, to provide data in exchange for personalized services, all of these are exposing our habits and dispositions. In a nutshell,  personal data  has  high added values to change the way the society  thinks, lives and works, leading to a {\it personal data gold rush}, where service providers,
advertisers, and even governments are all casting a wide
net to collect personal data feverishly. This  will make  personal data  as a tradable asset \cite{senator}. However,  the personal data market is being hampered due to serious privacy issues. For example, in April 2018, a Norwegian non-profit organization called SINTEF reported the gay hookup app Grindr, which has more than 3.6 million daily active users, has been providing its users' HIV status to two other companies \cite{Grindr}; In October 2017, it was discovered that Alteryx exploded  data records for approximately 123 million U.S. households, with each record including 248 different data fields, making it simple to determine to whom the data was linked, either by looking at the details or by crosschecking with previous leaks \cite{Alteryx}.

The growing threat of personal data breach pinpoints
a need to develop countermeasures in order to protect individual
privacy. Based on who bears the responsibility of protecting data
privacy, the   state-of-the-art privacy preservation work can be classified into two  approaches: the data collector-based  and the data provider-based.
 The former \cite{collector1, collector4, collector2,collector3, provider2} was the mainstream approach   which  endows the data collector  with   the responsibility of data privacy  so that data can be protected well or
the privacy leakage will be limited to a certain level; more recent  studies, such as  \cite{provider1}, belong to the latter approach, which assumes that the data collector is not trustworthy and the data provider should  report only
a privacy-preserving version of the data to take full control of her\footnote{In this paper,  we denote the data provider as {\it she} and the collector as {\it he} for
easy differentiation. } own data privacy.

In fact, the data provider
has hitherto been at an unfair disadvantage in data trading due to the following reasons:  1) different from typical commodities,  personal data is  {\it  free commons} in nature \cite{spiekermann2015challenges}, implying that it is convenient to copy and transfer, which causes personal data easy to leak.  Further, the more powerful the digital technologies  are, the cheaper and faster the processing of sensitive data become, giving rise to more serious privacy concerns; 2) information is inherently asymmetric in  data trading. On one hand, it is hard for the data provider  to   know  {\it whether} and {\it to whom} the data collector  will resell data; on the other hand, the data collector can also mask some information to  conceal his identity for avoiding being detected.  As a result, data providers  often lack information to make privacy-sensitive decisions as empirical and theoretical research suggested \cite{acquisti2005privacy}.

The above disadvantage may make the data collector unreliable. Further, the second approach,  with the main idea of being to submit  privacy-preserving  data by its owner,  may leave  dire straits to both data collectors and providers. This is because noisy, pseudonymous, or anonymous data is clearly less valuable than the authentic data, and thus  the   data provider-based approach for privacy preservation likely would cut down economic and social benefits from big data analysis,  such as personalizing services, optimizing  decision making and predicting
future trends. 
In short, it is challenging to find a silver bullet to achieve  a win-win situation where the data provider reports true data and the collector is honest enough to protect  privacy.

In this paper, we tackle the above challenge from a fresh angle, aiming to address the privacy leakage issue from  the root source. Specifically, we study the interaction between the data provider and the data collector where both sides are allowed to take preventive measures against each other. Since the dilemma in data trading  stems from the disadvantage of the data provider, a natural idea is to  change her inferior position by making the data provider dominate the game with the collector.  To  that aim,  we resort to the
    zero-determinant (ZD) strategies \cite{press2012iterated}, whose adopter (i.e.,  the ZD player)  can unilaterally
set the expected utility  of its opponent or the ratio of their expected utilities, no matter what actions the opponent takes. The power of the ZD strategies offers  the data provider the opportunities to control the expected payoff of the data collector, based on which  incentive mechanisms can be applied for stimulating a rational data collector,  who is utility-driven,  to behave cooperatively. The main contributions of this paper can be summarized as follows:
\begin{itemize}
  \item We propose the noisy-sequentially zero-determinant (NSZD) strategies by tailoring the classical ZD strategies, originally designed for the simultaneous-move game, to adapt to the noisy sequential game, which is employed to model the data trading under consideration.
  \item The conditions that the data provider can adopt the pinning NSZD strategy are analyzed, through which the data provider is able to unilaterally set the expected payoff
of the data collector. We further study the impacts of the data provider's strategy and the noises added by both players on the expected payoff of the data collector. Moreover, numerical simulation is used to illustrate the feasible region
under which the data provider  can be
a  pinning NSZD player, which  makes  room for incentive mechanisms to stimulate the cooperation of the data collector.
  \item We study the conditions that the data provider can be an  extortionate NSZD adopter so as to enforce a positive relationship between  her and the data collector's expected payoffs. This will facilitate the cooperation of both players, because their expected payoffs can be  maximized in this case,  thus boosting a healthy data trading market.
  \item We prove that the data collector cannot be an NSZD player to employ the pinning or  extortionate  strategies. In other words, this finding reveals that the data collector, who already has an advantage, cannot take advantage of  NSZD, to further dominate the data market for deteriorating privacy leakage.
\end{itemize}



The rest of this paper is organized as follows. We introduce the related work in Section \ref{re-wo}. The game formulation for  data trading  is presented in Section \ref{game}. Section \ref{ZD} proposes the NSZD strategies. The analysis of the data provider being a  pinning NSZD and an extortionate NSZD player are presented in detail respectively, in Sections \ref{providerzd} and \ref{providerzd_ex}.    Section \ref{collectorzd} analyzes the possibility whether the data collector can employ NSZD.   The whole paper is concluded in Section \ref{conclusion}.

\section{Related work}\label{re-wo}

Based on the role of data protection, the state-of-the-art privacy preservation work can be classified into two  categories: the data collector-based and the data provider-based.

Most existing studies belong to the first category of approaches, where it is the data collector who takes full responsibility of data privacy so that   data can be protected well or  the privacy leakage  will be limited to a certain level \cite{collector1, collector4, collector2,collector3, provider2}.  For example, Haiming Jin {\it et al.} \cite{collector1} proposed an  incentive mechanism for the data collector to gather personal data, which takes advantage of  data aggregation and  perturbation to meet the satisfaction of data accuracy and privacy protection, respectively.
In \cite{collector4},   a data collector buys data from multiple
data providers and employs anonymization techniques to protect
their privacy. In this scenario,
a contract theoretic
approach was proposed for the data collector to balance  the trade-off between  privacy protection and data utility.

\cite{collector2} protects the privacy of both data and cost through designing a Bayesian incentive
compatible and privacy-preserving mechanism.
Arpita Ghosh {\it et al.}  \cite{collector3} designed a differentially private peer-prediction mechanism to collect data from privacy-sensitive population considering their  statistic of the preference for privacy, which can guarantee $\epsilon$-differential
privacy to each data provider  against the adversary who is able to observe the statistical estimate.
In  \cite{provider2}, some mechanisms are for the data collector, which  are individually rational for  data providers
with monotonic privacy valuations, truthful for those whose privacy valuations are not too large, and accurate
if there are not too many data providers with too-large privacy
valuations.  The common traits of \cite{collector2,collector3, provider2} lie in that the data collector dominates the game between it and the data provider through designing a mechanism, namely a game rule, to preserve data privacy or compensate the data providers
for their loss in privacy.


Considering that
the data provider has no control of data privacy after transferring
private data to the data collector, Weina Wang {\it et al.}  \cite{provider1} proposed
the second category of approach,  in which the data
collector is assumed to be not trustworthy and hence the data provider takes full
control of its own data privacy.
In detail,  a data provider reports only a privacy-preserving version of its data. In this scenario,   \cite{provider1} employs a game-theoretic model to derive the fundamental limits on the value of privacy, based on which the lower and upper bounds on the minimum total  payment for the data collector are deduced to guarantee a given data accuracy.


Different from the  above work, some researchers focus on  the data privacy preservation methods, rather than who will adopt them.  These methods can be classified into three kinds: the noise-based, the anonymity-based and the cryptography-based. In the noise-based methods,  randomization techniques are employed to add noises to the original data for masking some  attributes of records while keeping  the structures unchanged for future analysis \cite{noise1,noise2}. In the anonymity-based method, data generalization
is employed to hide the link between records and data owners  \cite{anonymity1,anonymity2}.   $k$-anonymity and $l$-diversity are the two well-known anonymity-based methods.  As the key technology of data security, cryptography-based methods convert plain texts into cipher ones through encryption keys or algorithms to prevent an illegal user from getting useful information \cite{cryptography1,cryptography2}.


\section{Game formulation for data trading}\label{game}

Personal data can optimize decision making and promote economic
growth. It is a valuable asset in the market,  where the data collector purchases informative data from the data provider, which  can be an individual or a data broker who  aggregates information from a variety of sources. Since data is  free commons in nature and there is often asymmetric information in data trading,  data breach becomes serious.  In this scenario, the data provider should decide to  submit either a  fully identifiable data or its  privacy-preserving version such as noise-injected data, which is a dilemma due to following trade-off:  the unprocessed data is more valuable but   jeopardizes the owner's privacy; the data collector should determine whether to resell the provider's
personal data to third parties:  the sale of such data may
gain more profit for the data collector but would do  harm to his reputation
 when detected, which may reduce his future chances of data trading.

Thus motivated, we cast the above problem in a game-theoretic setting, where  the  data trading game between the data provider and the collector is a sequential one in that the data  provider moves first (i.e., submitting the data) and the collector makes his decision
later (i.e., whether resells the data). As illustrated in Fig. 1,  both players, namely the data provider and the collector, can take one of two actions: the cooperation $C$ or the defection $D$. In detail, the data provider takes action $C$ if  she reports original data or $D$ if provides data with noise injection. Similarly, the data collector is cooperative ($C$) when he protects data privacy or detective ($D$)  if it resells the data to third parties.  In this paper, we assume that the data collector is intelligent to conceal his identity, i.e., adding some noises to the data for interfering the data provider's judgement,  before leaking private data. Thus,
the state vector of the data trading game is   $[xy]_4$, where $x \in \{C, D\}$ and $y \in \{C, D\}$ respectively denote the actions of the data provider and the collector.

\begin{figure}\label{game1}
\includegraphics[scale=0.36]{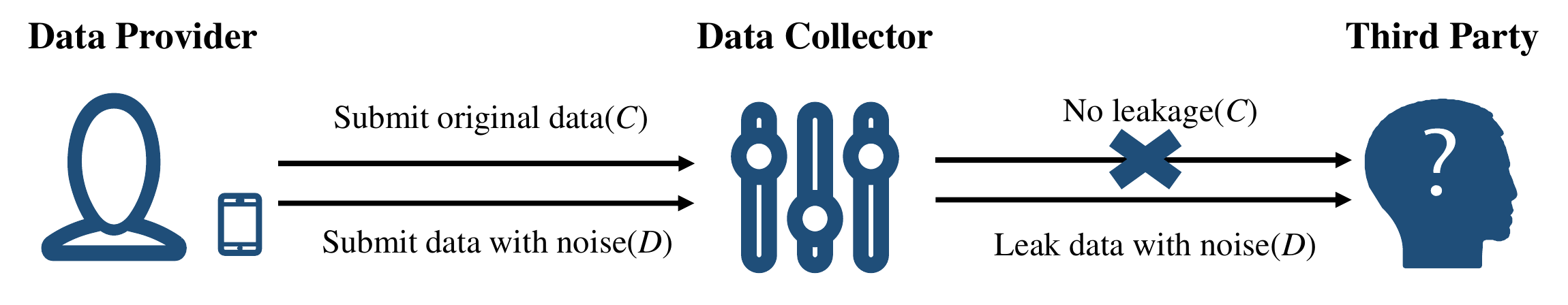}
\caption{Data trading game  between the data provider and the data collector.}
\end{figure}

Table~\ref{tab:payoffs} summarizes the payoff matrix under the above four states, where the data provider is denoted as X and the collector is indicated as Y.  In this payoff matrix, $\tau_1$, $\tau_2$, $\sigma_1$ and $\sigma_2$ are all scale parameters. $e_1$ and $e_2$ are noises respectively added by the data provider and the collector which are normalized within $[0,1]$. Since  $e_1$ is employed to preserve data privacy,  the value of the data is proportional to $1-e_1$. Moreover,  because $e_2$ is adopted to mask the identity of the data collector, $1-e_2$ reflects the probability that the malicious behavior of  the data collector can be detected.
Other parameters in the payoff matrix are outlined as follows: 1)  $C_P$ and $C_C$ are  respectively the  profits of the data provider and the collector obtained
from the data trading when two players are cooperative and thus their profits will reduce to $(1-e_1)C_P$ and $(1-e_1)C_C$ respectively if the data provider defects;  2)  the data collector can obtain an extra profit
$C_{C1}$ from leaking the data provider's information while the data provider would have a loss of $C_{P1}$ in this situation; and thus when the noise-added data is traded, such a profit and a loss will change to $(1-e_1)C_{C1}$ and $(1-e_1)C_{P1}$  respectively; 3) when the  malicious behavior of the data collector  is perceived, he will suffer a loss of $C_{C2}$  due to the reputation loss;  at the same time,  the data provider  would   gain a payoff of $C_{P2}$ because she makes up the future data breach loss. In light of the above analysis, the probability of detecting privacy leakage is $1-e_2$.

Based on the above analysis, we can deduce the payoffs of two players in each state. For example, when both players are defective, namely in the state of $DD$, the payoff of the data provider includes three parts: the profit obtained from selling noise-added data to the data collector (i.e., $(1-e_1)C_P$),  the expected loss of privacy leakage (namely $(1-e_1)C_{P1}$)   and the expected profit ($(1-e_2)C_{P2}$) due to detecting the malicious behavior of the data collector which can compensate for future privacy leakage loss. Hence, the payoff of the data provider is $(1-e_1)C_P-(1-e_1)C_{P1}+(1-e_2)C_{P2}$  as shown in Table~\ref{tab:payoffs}. We omit the analysis on how to calculate the payoffs of two players in other states  for  avoiding redundancy.

\begin{table*}
  \caption{Payoff matrix in the data trading game.}
  \label{tab:payoffs}
  \centering
  \begin{tabular}{cccc}
    \toprule
    X's Action &Y's Action & X's Payoff &Y's Payoff \\
    \midrule
    \multirow{2}*{$C$}& $C$& $C_P$ &$C_C$\\
    ~& $D$ & $C_P-C_{P1}+(1-e_2)C_{P2}$ &$C_C+C_{C1}-(1-e_2)C_{C2}$\\
    \cmidrule{2-4}
     \multirow{2}*{$D$}& $C$& $(1-e_1)C_P$ &$(1-e_1)C_C$\\
    ~& $D$ & $(1-e_1)C_P-(1-e_1)C_{P1}+(1-e_2)C_{P2}$ &$(1-e_1)C_C+(1-e_1)C_{C1}-(1-e_2)C_{C2}$\\
    \bottomrule
  \end{tabular}
\end{table*}

Denote   the payoff vectors of the data provider and the collector  as $$\bold{U_P}=(U_P(CC), U_P(CD), U_P(DC), U_P(DD)),$$ and  $$\bold{U_C}=(U_C(CC), U_C(CD), U_C(DC), U_C(DD)),$$ respectively. In the following, we will analyze the relationship among these payoffs. From the perspective of the data provider, her payoff when the data collector acts $C$ is always higher than that when $D$ is adopted by the data collector, implying an honest data collector is good for the data provider. In addition, the data provider can gain more profits if she chooses  $C$ rather than $D$ when the data collector is cooperative, which accords with the practice that the original data is more valuable  than the noisy data in a healthy data trading environment. As a result, we can obtain
\begin{equation}\label{eq:p1}
U_P(CC)>U_P(CD),
\end{equation}
\begin{equation}\label{eq:p2}
U_P(CC)>U_P(DC)>U_P(DD).
\end{equation}

If $C_P> C_{P1}$, which means that the value of data is larger than that of the privacy, and hence the data provider will obtain the least profit when both players choose $D$. Otherwise, when $C_P<C_{P1}$,  the data provider is more sensitive to data privacy, which makes   her payoff minimized when she acts $C$ while the data collector takes the action $D$.

We elaborate further on (1) and (2). From the viewpoint of the data collector, we make an assumption that  the data collector can always obtain higher when he defects than when he  cooperates. This is because if the data collector cannot benefit from leaking privacy, his dilemma would disappear and he would decide not to resell data to third parties. In this case, it is not of interest to study the privacy leakage problem in data trading. Hence, the above assumption clarifies the conditions under which our problem exists. Moreover, we think the payoff of  the data collector when the provider acts $C$   is more  than  that when she is defective. In another word, the condition guarantees the noisy data is less valuable than the original data. Conclusively, we have
\begin{equation}\label{eq:c1}
U_C(CD)>U_C(CC)>U_C(DC),
\end{equation}
\begin{equation}\label{eq:c2}
U_C(CD)>U_C(DD)>U_C(DC).
\end{equation}

Based on \eqref{eq:c1} and \eqref{eq:c2},  the data collector can maximize and minimize his payoff when the states are $CD$ and $DC$ respectively.

\section{Noisy-sequentially zero-determinant strategies}
\label{ZD}

\begin{figure*}
\caption{Markov matrix of the data trading game.}
\label{fig:transition}
\begin{small}
\begin{displaymath}
\setlength{\arraycolsep}{1.5pt}
\centering
\left[
\begin{array}{cccc}
p_1q_1& p_1(1-q_1)& (1-p_1)[(1-e_1)q_1+e_1q_2]&(1-p_1)[(1-e_1)(1-q_1)+e_1(1-q_2)]\\
\left[e_2p_1+(1-e_2)p_2\right]q_1& \left[e_2p_1+(1-e_2)p_2\right](1-q_1)&
\left(\begin{array}{c}\left[e_2(1-p_1)+(1-e_2)(1-p_2)\right]\\\left[(1-e_1)q_1+e_1q_2\right]\end{array}\right)&
\left(\begin{array}{c}\left[e_2(1-p_1)+(1-e_2)(1-p_2)\right]\\\left[(1-e_1)(1-q_1)+e_1(1-q_2)\right]\end{array}\right)\\
p_3q_1& p_3(1-q_1)& (1-p_3)[(1-e_1)q_1+e_1q_2]&(1-p_3)[(1-e_1)(1-q_1)+e_1(1-q_2)]\\
\left[e_2p_3+(1-e_2)p_4\right]q_1& \left[e_2p_3+(1-e_2)p_4\right](1-q_1)&
\left(\begin{array}{c}\left[e_2(1-p_3)+(1-e_2)(1-p_4)\right]\\\left[(1-e_1)q_1+e_1q_2\right]\end{array}\right)&
\left(\begin{array}{c}\left[e_2(1-p_3)+(1-e_2)(1-p_4)\right]\\\left[(1-e_1)(1-q_1)+e_1(1-q_2)\right]\end{array}\right)\\
\end{array}
\right]
\end{displaymath}
\end{small}
\end{figure*}

As mentioned above,  due to being  at a disadvantage,   the data provider may have to sell a privacy-preserving version of data to the collector.
Hence, to realize a win-win situation where the data provider reports authentic data and the collector preserves  privacy, we take a fresh approach to reverse  the disadvantaged position of the data provider. To that aim,  we empower the data provider to adopt
ZD strategies \cite{press2012iterated}, so that she can  unilaterally set  the utility of the collector to  motivate  a healthy data trading environment. However, the original ZD strategies cannot be directly employed here, because the interaction between the data provider and the collector is a noisy sequential game. Particularly, in our scenario,   the noises $e_1$ and $e_2$ are added into data by the  provider and the collector respectively, leaving it to the player to determine  the action of the opponent only through observations. Due to the errors in observation, the calculation of the state transition probabilities should take into considerations all possible actions taken by the opponent, which is in stark contrast from that in the original ZD strategies.

To tailor the ZD strategies to the data trading game, we propose {\it the noisy-sequentially zero-determinant (NSZD) strategies}, which are detailed as follows. In the data trading game, the data provider  takes  action first, and thus her strategy is made based on  the joint actions of two players in the previous round. Let $\mathbf p$ be the strategy of the data provider and then we have $$\mathbf p=(p_1,p_2,p_3,p_4).$$ The elements in $\mathbf p$   are the cooperation probabilities of the data provider under each of the previous outcome $Cg, Cb, Dg$ and $Db$, where $g$ and $b$ represent observation results reflecting the data collector are cooperative and defective respectively.

Since the data collector is a follower in the sequential game,   he can  make the decision according to his observation on the actions of the data provider in the current round, which is denoted as $$\mathbf q=(q_1,q_2).$$ In  $\mathbf q$,   each element denotes the cooperation probabilities of the data collector under each of his observations $g$ and $b$.

With the definitions of $\mathbf p$ and $\mathbf q$, the Markov matrix of the data trading game can be constructed as $$\mathbf M=[M_{vw}]_{4\times4},$$ where $M_{vw}$ means the transition probability from state $v$ to $w$. Both $v$ and $w$ belong to the state space $\{CC,CD,DC,DD\}$, so $\mathbf M$ is a $4\times4$ matrix. The state transition probability can be obtained by $$M_{vw}=F_{vw}\cdot G_{vw},$$ of which $F_{vw}$ and $G_{vw}$ respectively represent the probabilities of actions made by the data provider and the data collector to form state $w$ when the state of the previous round is $v$. $F_{vw}$ and $G_{vw}$ can be calculated as follows.


\begin{displaymath}
F_{vw}=
\left\{
\begin{aligned}
f(1)~~~~~~~~~  ~v=CC\\
e_2f(1)+(1-e_2)f(2) ~v=CD\\
f(3)~~~~~~~~~  ~v=DC\\
e_2f(3)+(1-e_2)f(4) ~v= DD\\
\end{aligned}
\right.,
\end{displaymath}
where $f(i)={p_i}^{\alpha_i}(1-{p_i})^{1-\alpha_i}$, $i \in \{1,2,3,4\}$ and when the data provider chooses to take action $C$, $\alpha_i=1$, otherwise $\alpha_i=0$.
 \begin{displaymath}
G_{vw}=
\left\{
\begin{aligned}
q_1~~~~~~~~~~~~~w=CC\\
1-q_1~~~~~~~~~~w=CD\\
(1-e_1)q_1+e_1q_2 ~~~~~w=DC\\
(1-e_1)(1-q_1)+e_1(1-q_2) ~w=DD\\
\end{aligned}
\right..
\end{displaymath}

We further  explain the calculation of  $F_{vw}$ and $G_{vw}$.  If the opponent cooperates, the player   can judge its action clearly since  no noise is added to interfere with the observation. In another word,   the action of an opponent obtained through observation is the same as the actual one in this case. Therefore, when $v=CC, DC$,  implying that the data collector is cooperative  in the previous round,  the data provider gets the observation $g$ with the probability of 1,  and then $F_{vw}=f(1)$ in the case $v=CC$ and $F_{vw}=f(3)$ in the case $v=DC$ because in these two situations, the data provider adopts her strategy only based on the previous outcome $Cg$ and $Dg$, respectively. Likewise, when $w=CC, CD$, implying that the data provider cooperates in the current round, the data collector observes $g$ with no interference and takes action under the current outcome $g$. Thus, $G_{vw}=q_1$ in the case $w=CC$ and $G_{vw}=1-q_1$ in the case $w=CD$.

 When the opponent takes action $D$, observation errors may occur whose chances are proportional to the noises $1-e_1$ and $e_2$. In this case, the transition probability depends on the probabilities of all possible observations. To explain how to calculate the transition probability  in this situation, we take $v=DD, w=DD$ as an example to deduce $F_{vw}$ and $G_{vw}$.
  As shown in  Fig. \ref{fig:tran-example},  when the data collector defects in the previous round,
 the data provider may get a false observation with a probability of  $e_2$   and then she has the probability of ($1-p_3$) to take action $D$ under the previous outcome $Dg$. On the contrary, the data provider has the probability of ($1-e_2$) to know the data collector's defective action and chooses to defect with the probability of ($1-p_4$). Therefore, the probability of the data provider taking action $D$ when $v=DD$ is $[e_2(1-p_3)+(1-e_2)(1-p_4)]$.
    When the data provider takes the action $D$ with noise $e_1$, the data collector considers the opponent is cooperative and defective  with  the probability of $(1-e_1)$  and $e_1$ respectively. Hence, the data collector would have the probabilities of $(1-q_1)$ and $(1-q_2)$ to choose the action $D$ under these two circumstances. In this case, the probability that the data collector takes action $D$ when the data provider adopts $D$ is $[(1-e_1)(1-q_1)+e_1(1-q_2)]$.
     Taking advantage of the similar methods, we can
calculate the transition probabilities in other cases, and thus the state transition matrix $\mathbf{M}$, also known as the Markov matrix, can be deduced as  shown in Fig.~\ref{fig:transition}.


\begin{figure}
\includegraphics[height=1.476in, width=3.5in]{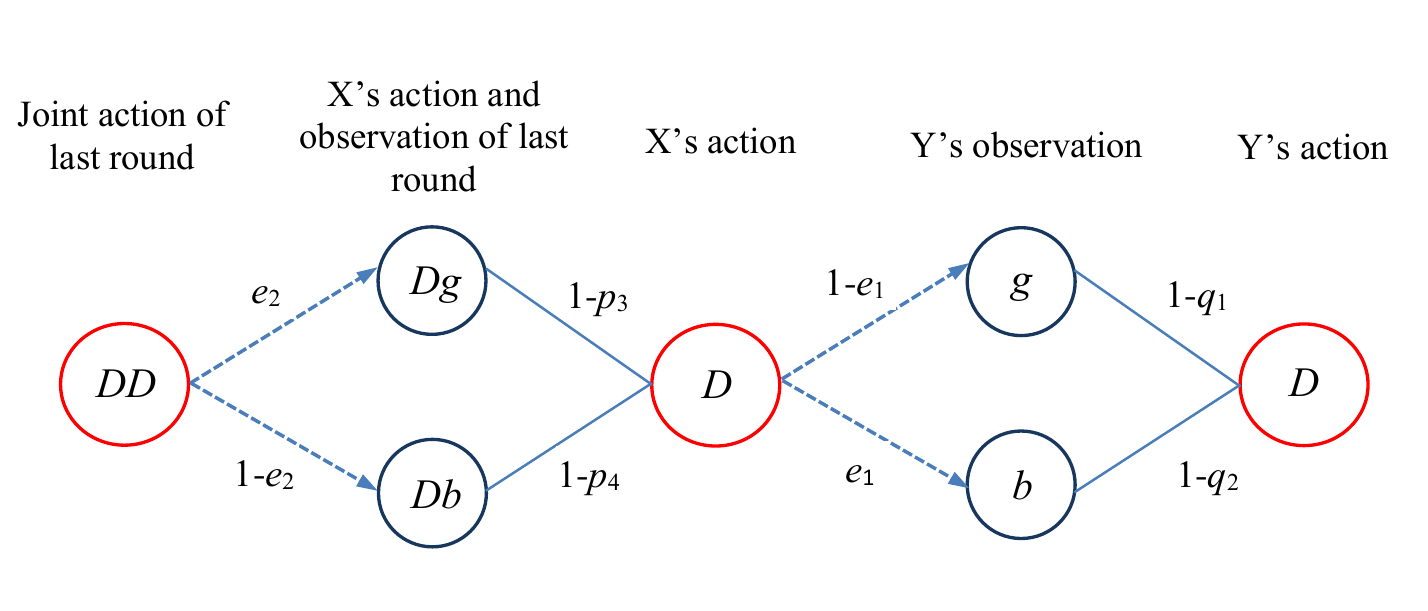}
\caption{Illustration of the transition from previous state $DD$ to a new state $DD$. The red circle shows the real action of two players while the blue one is the combination of action and observation or just observation.}
\label{fig:tran-example}
\end{figure}

Let $\mathbf v$ be the stationary vector of $\mathbf M$ such that
\begin{equation}\label{eq:ZD1}
\mathbf v^\mathrm{T}\mathbf M=\mathbf v^\mathrm{T} or\ \mathbf v^\mathrm{T}\mathbf M'=\mathbf 0,
\end{equation}
where $\mathbf M'=\mathbf M-\mathbf I$ and $\mathbf 0$ is a vector with each element being 0. According to Cramer's rules, the following equation  holds
\begin{equation}\label{eq:ZD2}
Adj(\mathbf M')\mathbf M'=det(\mathbf M')\mathbf I= \mathbf 0,
\end{equation}
in which $Adj(\mathbf M')$ is the adjugate matrix of $\mathbf M^{'}$. In light of \eqref{eq:ZD1} and \eqref{eq:ZD2}, we know that the stationary vector $\mathbf v$ is proportional to every row of $Adj(\mathbf M')$ \cite{press2012iterated}. Consequently, $\mathbf v$ can equal to the fourth row of $Adj(\mathbf M')$, each element of which is calculated from the first three columns of $\mathbf M'$.  The determinant  of $\mathbf M'$  remains unchanged if we take the following two elementary transformations: i) adding the first column of $\mathbf M'$ into the second one and ii) adding the second column with multiplier factor [$(1-e_1)q_1+e_1q_2$] into the third one. Finally, replacing the last column of $\mathbf M'$ by an arbitrary four-element vector $\mathbf f=[f_1, f_2, f_3, f_4]$, we can get a new matrix
\\\\
$ \mathbb{\small{M}}= \bordermatrix*[||]{%
...&{\color{red} p_1-1} &{\color{blue}0} & f_1  &\cr
...&{\color{red}e_2p_1+(1-e_2)p_2-1} &{\color{blue}0} &f_2  & \cr
...&{\color{red}p_3} &{\color{blue}(1-e_1)q_1+e_1q_2-1} &f_3 & \cr
...&{\color{red}e_2p_3+(1-e_2)p_4} &{\color{blue}(1-e_1)q_1+e_1q_2} &f_4 & \cr\\
 &\mathbf {\hat p} &\mathbf {\hat q} &  & \cr
}$
\\\\
with  the second column (denoted as $\mathbf {\hat p}$) being solely controlled by the data provider and the third column (denoted as $\mathbf {\hat q}$) being only related to the strategies of the data collector. Moreover, we omit the first column because what we focus on is the relation between the second (or third) column and the fourth column. The dot product of the stationary vector $\mathbf v$ and $\mathbf f$ can be written as
\begin{displaymath}
\mathbf v \cdot \mathbf f=D(\mathbf {\hat p},\mathbf {\hat q}, \mathbf f),
\end{displaymath}
because the determinant $D(\mathbf {\hat p},\mathbf {\hat q}, \mathbf f)$ can be expanded by the minors on the fourth column of $\mathbb{\small{M}}$, and each coefficient of $f_c$ $(c\in\{1,2,3,4\})$ is just the components of $\mathbf v$ described above. We have known that the stationary vector of a Markov matrix is the probability vector in the long run. Therefore, if  $\mathbf f = \bold{U_P}$, the payoff vector of  the data provider,   the determinant $D(\mathbf {\hat p},\mathbf {\hat q}, \mathbf U_P)$  is the expected payoff of the data provider. Follow the same way, we can obtain the data collector's expected payoff when we replace $\mathbf f$ by $\bold{U_C}$.

Based on \cite{press2012iterated}, the normalized expected payoff of the data provider ($S_P$) and the data collector ($S_C$) can be respectively derived as:
\begin{displaymath}
S_P=\frac{\mathbf{v\cdot U_P}}{\mathbf {v \cdot 1}}=\frac{D(\mathbf {\hat p}, \mathbf {\hat q}, \mathbf{U_P})}{D(\mathbf {\hat p}, \mathbf {\hat q},\mathbf 1)},
\end{displaymath}
\begin{displaymath}
S_C=\frac{\mathbf{v\cdot U_C}}{\mathbf {v \cdot 1}}=\frac{D(\mathbf {\hat p},\mathbf {\hat q},\mathbf{U_C})}{D(\mathbf {\hat p}, \mathbf {\hat q},\mathbf 1)}.
\end{displaymath}
Thus, the liner relation of the two expected payoffs with coefficients $\alpha \in \textbf{R},   \beta \in \textbf{R}, \gamma \in \textbf{R}$ can be written as:
\begin{equation}
\alpha S_P+\beta S_C+\gamma=\frac{D(\mathbf {\hat p},\mathbf {\hat q},\alpha \mathbf{U_P}+\beta \mathbf U_C+\gamma \mathbf 1)}{D(\mathbf {\hat p},\mathbf {\hat q},\mathbf 1)}.
\end{equation}
If the data provider  chooses a proper strategy  satisfying
\begin{equation}\label{cdpzd}
\mathbf {\hat p}=\alpha \mathbf{U_P}+\beta \mathbf U_C+\gamma \mathbf 1,
\end{equation}
 or the data collector's strategy can meet
 \begin{equation}\label{cdczd}
\mathbf {\hat q}=\alpha \mathbf U_P+\beta \mathbf U_C+\gamma \mathbf 1,
\end{equation}
 $\alpha S_P+\beta S_C+\gamma=0$, indicating that the linear relation between the two expected payoffs ($S_P$ and $S_C$) is established. Since the strategies referred above are realized by setting the determinant to zero and they function in a noisy sequential game,  we call them  {\it the noisy-sequentially zero-determinant (NSZD) strategies}. In the following, we will detail how to use   the NSZD strategies for the data provider to set the  expected utility of the data collector to create a win-win situation where the data trading market is healthy.

\section{The data provider as a pinning NSZD player}
\label{providerzd}

In this study, we will analyze how the  data provider  takes advantage of the NSZD strategies to dominate the game with the data collector.  In detail, the data provider can adopt the pinning and extortionate NSZD strategies. By the former strategy,  the data provider can unilaterally control the payoff of the opponent in the noisy sequential game no matter how the collector will react, while through the latter strategy, the
data provider can enforce an extortionate linear relation  between her and the collector's  expected payoff, thereby always seizing an advantageous share of payoffs.  In the following, the pinning NSZD strategy is detailed.


\begin{theorem}
\label{theorem:dcpinning}
The data provider can unilaterally set the expected payoff of the opponent (i.e., the data collector) as
\begin{equation}\label{relation}
S_C=-\frac{\gamma}{\beta}=\frac{A(1-p_1)+Bp_4}{(1-p_1+p_4)},
\end{equation}
where $A=\frac{U_C(DD)-e_2U_C(DC)}{1-e_2}$ and $B=U_C(CC)$.
\end{theorem}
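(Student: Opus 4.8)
The plan is to read the claim off the linear identity
$\alpha S_P+\beta S_C+\gamma=D(\mathbf{\hat p},\mathbf{\hat q},\alpha\mathbf{U_P}+\beta\mathbf{U_C}+\gamma\mathbf 1)/D(\mathbf{\hat p},\mathbf{\hat q},\mathbf 1)$ already derived in Section~\ref{ZD}, specialized to $\alpha=0$. The very meaning of ``pinning'' is that the provider wants $S_C$ fixed whatever strategy $\mathbf q$ the collector plays, so the relation must carry no $S_P$ term; hence one sets $\alpha=0$, leaving $\beta S_C+\gamma=D(\mathbf{\hat p},\mathbf{\hat q},\beta\mathbf{U_C}+\gamma\mathbf 1)/D(\mathbf{\hat p},\mathbf{\hat q},\mathbf 1)$. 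First I would have the provider choose $\mathbf p$ so that her controlled column satisfies $\mathbf{\hat p}=\beta\mathbf{U_C}+\gamma\mathbf 1$ --- this is exactly condition \eqref{cdpzd} with $\alpha=0$. Then the numerator determinant has its second and fourth columns identical and therefore vanishes, so $\beta S_C+\gamma=0$, i.e.\ $S_C=-\gamma/\beta$, which depends only on $(\beta,\gamma)$ and hence is independent of $\mathbf{\hat q}$. This is the first equality in \eqref{relation}.

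Next I would translate the vector equation $\mathbf{\hat p}=\beta\mathbf{U_C}+\gamma\mathbf 1$ into conditions on the entries of $\mathbf p$, using $\mathbf{\hat p}=(p_1-1,\;e_2p_1+(1-e_2)p_2-1,\;p_3,\;e_2p_3+(1-e_2)p_4)$ as read from Fig.~\ref{fig:transition}. Matching the four coordinates, the first gives $p_1=1+\beta\,U_C(CC)+\gamma$, hence $1-p_1=-(\beta B+\gamma)$ with $B=U_C(CC)$; substituting the third equation $p_3=\beta\,U_C(DC)+\gamma$ into the fourth, $e_2p_3+(1-e_2)p_4=\beta\,U_C(DD)+\gamma$, gives $p_4=\beta A+\gamma$ with $A=(U_C(DD)-e_2U_C(DC))/(1-e_2)$. (The second equation merely fixes $p_2$ and plays no further role.)

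Finally I would plug these into the asserted closed form and check the algebra: $1-p_1+p_4=\beta(A-B)$, while $A(1-p_1)+Bp_4=-A(\beta B+\gamma)+B(\beta A+\gamma)=\gamma(B-A)$, so the ratio is $\gamma(B-A)/\bigl(\beta(A-B)\bigr)=-\gamma/\beta=S_C$, which is the second equality in \eqref{relation}. Here I would note the two nondegeneracy hypotheses implicit in the statement: $\beta\neq0$, without which the relation pins nothing, and $A\neq B$ (equivalently $1-p_1+p_4\neq0$), without which the fraction is undefined.

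I do not anticipate a genuine obstacle. The analytically substantial part --- the stationary-vector/adjugate argument and the fact that a determinant with two equal columns is zero --- is already established in Section~\ref{ZD}, so the remaining work is the bookkeeping above. The only points that warrant a careful sentence or two are (i) the justification that $\alpha=0$ is \emph{forced} for a pinning strategy, since otherwise $S_C$ would only be tied to the quantity $S_P$ that the provider cannot control, and (ii) the observation that the solved $p_1,p_2,p_3,p_4$ are affine functions of $(\beta,\gamma)$, so that a legitimate mixed strategy $\mathbf p\in[0,1]^4$ exists for a suitable parameter range --- precisely the feasibility question examined afterwards, which Theorem~\ref{theorem:dcpinning} itself need not settle.
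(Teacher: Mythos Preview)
Your proposal is correct and follows essentially the same route as the paper: set $\alpha=0$ in \eqref{cdpzd}, expand $\mathbf{\hat p}=\beta\mathbf{U_C}+\gamma\mathbf 1$ into four scalar equations, and solve to express $-\gamma/\beta$ in terms of $p_1$ and $p_4$. The only cosmetic difference is that the paper first eliminates $\beta,\gamma$ to write $p_2,p_3$ as functions of $p_1,p_4$ (its equations \eqref{eq:pinning3}--\eqref{eq:pinning4}) before asserting \eqref{relation}, whereas you go the other way, reading $\beta,\gamma$ off the first and fourth rows and verifying the closed form directly; your algebraic check and your remarks on $\alpha=0$ being forced and on the nondegeneracy conditions $\beta\neq0$, $A\neq B$ are in fact more explicit than what the paper writes.
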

\begin{proof}
In light of \eqref{cdpzd}, if the data provider can find proper $p_1, p_2, p_3$ and $p_4$ to let $\mathbf {\hat p}=\beta \mathbf{U_C}+\gamma\mathbf 1$ (set $\alpha =0$) hold, then she can unilaterally set the opponent's expected payoff as
\begin{equation}\label{eq:pinning1}
S_C= -\frac{\gamma}{\beta}.
\end{equation}
Particularly,  $\mathbf {\hat p}=\beta \mathbf{U_C}+\gamma\mathbf 1$  equals to that the following equation set holds,
\begin{equation}\label{eqset1}
\left\{
\begin{aligned}
p_1-1=\beta U_C(CC)+\gamma\\
e_2p_1+(1-e_2)p_2-1=\beta U_C(CD)+\gamma\\
p_3=\beta U_C(DC)+\gamma\\
e_2p_3+(1-e_2)p_4=\beta U_C(DD)+\gamma\\
\end{aligned}
\right..
\end{equation}
There are six variables ($p_1,p_2,p_3,p_4,\beta$ and $\gamma$) in the above equation set. We use $p_1$ and $p_4$ to represent the rest variables. Then $p_2$ and $p_3$ can be written as
\begin{equation}\label{eq:pinning3}
\begin{aligned}
p_2=&\frac{1}{D_1}\{[U_C(CD)-U_C(DD)+e_2(U_C(DC)\\&-U_C(CC))]p_1+[U_C(CC)-U_C(CD)](1+p_4)\},
\end{aligned}
\end{equation}
\begin{equation}\label{eq:pinning4}
\begin{aligned}
p_3=&\frac{1}{D_1}\{[U_C(DD)-U_C(DC)](1-p_1)+[U_C(CC)\\&-U_C(DC)](1-e_2)p_4\},
\end{aligned}
\end{equation}
where $D_1=U_C(CC)-U_C(DD)-e_2[U_C(CC)-U_C(DC)]$. Then,  according to \eqref{eqset1}, \eqref{eq:pinning3},  and \eqref{eq:pinning4},  the expected payoff of the data collector can be represented as \eqref{relation}.
\end{proof}
\begin{theorem}
\label{theorem:ipp}
When  $p_4\neq0,  p_1\neq1$, the expected payoff of the data collector $S_C$ increases as $p_1$ or $p_4$ increases.
\end{theorem}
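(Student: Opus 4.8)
The plan is to start from Theorem~\ref{theorem:dcpinning}, which already supplies the closed form
$$S_C=\frac{A(1-p_1)+Bp_4}{1-p_1+p_4},$$
with $A$ and $B$ the constants defined there (they depend only on the collector's payoff entries and on $e_2$, not on $\mathbf p$), and to establish both monotonicity claims by directly differentiating this rational function. Writing $N=A(1-p_1)+Bp_4$ and $\Delta=1-p_1+p_4$, the quotient rule gives $\partial S_C/\partial p_1=(N-A\Delta)/\Delta^2$ and $\partial S_C/\partial p_4=(B\Delta-N)/\Delta^2$. The key simplification is that the numerators collapse: $N-A\Delta=(B-A)p_4$ and $B\Delta-N=(B-A)(1-p_1)$, so
$$\frac{\partial S_C}{\partial p_1}=\frac{(B-A)\,p_4}{(1-p_1+p_4)^2},\qquad \frac{\partial S_C}{\partial p_4}=\frac{(B-A)\,(1-p_1)}{(1-p_1+p_4)^2}.$$
The denominator is strictly positive wherever $S_C$ is defined; under the hypotheses $p_4\neq 0$ (hence $p_4>0$) and $p_1\neq 1$ (hence $1-p_1>0$), both partial derivatives have the same sign as $B-A$. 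Consequently, both assertions of the theorem reduce to the single inequality $B>A$.

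The remaining and genuinely substantive step is to verify $B>A$, i.e. $U_C(CC)>\bigl(U_C(DD)-e_2U_C(DC)\bigr)/(1-e_2)$, equivalently $(1-e_2)U_C(CC)+e_2U_C(DC)>U_C(DD)$. I would do this by substituting the explicit entries from Table~\ref{tab:payoffs} ($U_C(CC)=C_C$, $U_C(DC)=(1-e_1)C_C$, $U_C(DD)=(1-e_1)C_C+(1-e_1)C_{C1}-(1-e_2)C_{C2}$); after cancellation the inequality becomes $(1-e_2)(e_1C_C+C_{C2})>(1-e_1)C_{C1}$, a comparison between the discounted reputation loss the collector incurs and the resale gain he obtains. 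This is where the model's standing assumptions must be invoked -- positivity of the $C$-parameters, $e_1,e_2\in[0,1]$, and the payoff orderings \eqref{eq:c1}--\eqref{eq:c2} -- and I expect this to be the main obstacle: the derivative computation is purely mechanical, whereas pinning down the sign of $B-A$ genuinely uses the structure of the payoff matrix, and one may need an explicit side condition (e.g. an upper bound on $C_{C1}$ relative to $C_{C2}$, or a lower bound on $e_1$) for the inequality to hold throughout the feasible region of Theorem~\ref{theorem:dcpinning}.

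As a consistency check I would note the interpretation that makes the conclusion plausible: $S_C$ in \eqref{relation} is a convex combination of $A$ and $B$ with weights $1-p_1$ and $p_4$, so it always lies between $A$ and $B$, tends to $B=U_C(CC)$ as $p_1\to 1$ and to $A$ as $p_4\to 0$; once $B>A$ is known, increasing either $p_1$ or $p_4$ shifts weight toward the larger endpoint, which is exactly the asserted monotonicity. If one prefers to avoid calculus, the same two-line computation -- forming $S_C(p_1,p_4)-S_C(p_1',p_4)$ over a common denominator for $p_1>p_1'$ (and symmetrically in $p_4$) -- yields a difference proportional to $(B-A)$ times a positive factor, giving the result directly; either presentation works, and the crux is unchanged.
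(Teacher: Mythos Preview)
Your approach is essentially identical to the paper's: compute the two partial derivatives of the closed form \eqref{relation}, obtain
\[
\frac{\partial S_C}{\partial p_1}=\frac{(B-A)\,p_4}{(1-p_1+p_4)^2},\qquad
\frac{\partial S_C}{\partial p_4}=\frac{(B-A)\,(1-p_1)}{(1-p_1+p_4)^2},
\]
and reduce both monotonicity claims to the single inequality $B>A$.

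The only point of divergence is how $B>A$ is handled. The paper does not carry out the explicit substitution you perform; it simply asserts that $A<B$ ``should hold'' by appeal to \eqref{eq:c1}--\eqref{eq:c2} together with the intuitive remark that a more cooperative provider raises the collector's expected payoff, and proceeds. Your derivation of the equivalent condition $(1-e_2)(e_1C_C+C_{C2})>(1-e_1)C_{C1}$, and your observation that this does not follow from \eqref{eq:c1}--\eqref{eq:c2} alone and may require a side assumption on the parameters, is in fact more careful than the paper's treatment; you have correctly identified the one non-mechanical step and scrutinized it more closely than the authors do.
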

\begin{proof}
In light of \eqref{relation}, we found  i) when $p_4=0$, $S_C=A$;  ii) when $p_1=1$, $S_C=B$; and iii) when  $A=B$,  $S_C=A=B$.
For further analyzing  the impact of the data provider's strategy  on the expected payoff of the collector, we deduce the following  partial differential equations for $p_1$ and $p_4$. These are,
\begin{equation}\label{eq:partial}
\begin{aligned}
\frac{\partial {S_C}}{\partial {p_1}}&=\frac{(B-A)p_4}{(1-p_1+p_4)^2}\\
\frac{\partial {S_C}}{\partial p_4}&=\frac{(B-A)(1-p_1)}{(1-p_1+p_4)^2}
\end{aligned}.
\end{equation}

According to \eqref{eq:c1} and \eqref{eq:c2}, $A<B$ should hold in the data trading game since the more possible the data provider will cooperate, the larger the data collector's expected payoff.
 Therefore,    when  $p_4\neq0,  p_1\neq1$, $S_C$ increases as $p_1$ or $p_4$ increases
in light of \eqref{eq:partial}.
\end{proof}

Next, we will discuss the impacts of the noises $e_1$ and $e_2$, two key parameters in the data trading game. $e_1$ and $e_2$ are closely related to the strategies of the data provider and the collector since they  directly influence the payoff vectors of both players and the transfer probabilities (the Markov matrix).
Through our analysis, we have the following theorem:
\begin{theorem}
\label{theorem:ipe}
The noises $e_1$ and $e_2$  respectively added by the data provider and the collector, have  the opposite impacts on the expected payoff of the data collector $S_C$. In detail,  $S_C$ decreases  monotonously with  $e_1$ and increases monotonously with $e_2$.
\end{theorem}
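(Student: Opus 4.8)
The plan is to read off everything from the closed form \eqref{relation} established in Theorem~\ref{theorem:dcpinning}, namely $S_C=\frac{A(1-p_1)+Bp_4}{1-p_1+p_4}$, while holding the data provider's free strategy parameters $p_1,p_4$ fixed and tracking only the dependence on the noises. First I would note that $B=U_C(CC)=C_C$ involves neither $e_1$ nor $e_2$, so $S_C$ feels the noises solely through $A$. Writing $S_C$ as the convex combination
\begin{equation*}
S_C=\frac{1-p_1}{1-p_1+p_4}\,A+\frac{p_4}{1-p_1+p_4}\,B ,
\end{equation*}
whose weights are nonnegative and sum to one (with $1-p_1+p_4>0$ away from the degenerate point $p_1=1,\,p_4=0$), exhibits $S_C$ as an affine, nondecreasing function of $A$: indeed $\partial S_C/\partial A=(1-p_1)/(1-p_1+p_4)\ge 0$. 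Hence the sign of $\partial S_C/\partial e_i$ equals that of $\partial A/\partial e_i$ for $i\in\{1,2\}$, and the whole question reduces to the monotonicity of the scalar $A$.

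Second, I would substitute the Y-column of Table~\ref{tab:payoffs} into $A=\frac{U_C(DD)-e_2U_C(DC)}{1-e_2}$. Using $U_C(DC)=(1-e_1)C_C$ and $U_C(DD)=(1-e_1)C_C+(1-e_1)C_{C1}-(1-e_2)C_{C2}$, the $(1-e_2)$-denominator partially cancels and one is left with
\begin{equation*}
A=(1-e_1)C_C+\frac{(1-e_1)C_{C1}}{1-e_2}-C_{C2}.
\end{equation*}
Now the two derivatives are immediate: $\partial A/\partial e_1=-C_C-\frac{C_{C1}}{1-e_2}<0$ and $\partial A/\partial e_2=\frac{(1-e_1)C_{C1}}{(1-e_2)^2}>0$, the strict signs using $C_C,C_{C1}>0$ together with $0\le e_1<1$ and $0\le e_2<1$. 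Combined with the first paragraph this yields $\partial S_C/\partial e_1\le 0$ and $\partial S_C/\partial e_2\ge 0$ (strict whenever $p_1<1$), i.e. $S_C$ is monotonically decreasing in $e_1$ and monotonically increasing in $e_2$, which is the claim.

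I do not anticipate a genuine obstacle here: once \eqref{relation} is available the argument is a short computation. The only delicate point is the status of $p_1$ and $p_4$ --- as $e_1,e_2$ vary, the admissibility constraints $p_2,p_3\in[0,1]$ coming from \eqref{eq:pinning3}--\eqref{eq:pinning4} reshape the feasible region, so the precise statement is a \emph{ceteris paribus} monotonicity in each noise with the provider's free parameters held fixed, and I would state that hypothesis explicitly. The degenerate locus $p_1=1,\,p_4=0$, where \eqref{relation} is undefined, should simply be excluded, consistent with the hypotheses already imposed in Theorem~\ref{theorem:ipp}; likewise, if $e_1=1$ the dependence on $e_2$ is flat rather than strictly increasing, so the strict version of the monotonicity is really an interior statement.
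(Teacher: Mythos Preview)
Your proposal is correct and follows essentially the same approach as the paper: the paper's proof also just computes the partial derivatives $\partial S_C/\partial e_1$ and $\partial S_C/\partial e_2$ directly from the closed form \eqref{relation} and checks their signs. Your version is slightly more explicit in factoring the dependence through $A$ and in flagging the degenerate boundary cases, but the underlying computation and the resulting expressions coincide with the paper's equations \eqref{eq:e1}--\eqref{eq:e2}.
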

The above theorem can be easily obtained by using $S_C$ to derive partial derivatives for $e_1$ and $e_2$. That is,
\begin{equation}\label{eq:e1}
\frac{\partial {S_C}}{\partial {e_1}}=-\frac{1-p_1}{1-p_1+p_4}\frac{(1-e_2)C_C+\tau_1 C_{C1}}{1-e_2}<0,
\end{equation}
\begin{equation}\label{eq:e2}
\frac{\partial {S_C}}{\partial {e_2}}=\frac{1-p_1}{1-p_1+p_4}\frac{(1-e_1)\tau_1C_{C1}}{(1-e_2)^2}>0.
\end{equation}
 According to Theorem \ref{theorem:ipe},    the larger the noise $e_1$, the smaller the expected payoff of the data collector $S_C$, while  a small noise $e_2$ will lower the expected payoff of the data collector.
These accord with the reality that the larger noise added by the data provider will reduce the value of her data, which leads to a lower payoff of the data collector, and when the data collector reduces his noise, his malicious behavior is easier  to be detected, damaging his reputation and hence causing a less expected payoff of his own.

Finally, we employ the numerical analysis\footnote{It is worth noting that in all numerical simulations of our paper,  we test multiple parameter
settings, but we only show a part of them since other results   present similar trends.  So we omit them for
avoiding redundancy. } to study the impact of the data provider's strategy on $S_C$, the expected payoff of the data collector. Figs. \ref{fig:pinning1-1} and  \ref{fig:pinning1-2} respectively   show how $S_C$ changes with the strategy of the data provider under different noise and profit settings. Both Figs. \ref{fig:pinning1-1} and  \ref{fig:pinning1-2} verify our above analysis that the increasing of $p_1$ and $p_4$ leads to that of $S_C$; moreover, the data collector gets a lower payoff when $e_1$ increases or $e_2$ decreases, and vice versa. In addition,  the red part in each subfigure indicates  the feasible region under which the data provider can control the data collector's expected payoff, demonstrating that the data provider can be a pinning NSZD player with  proper strategies.

\begin{figure*}
\centering
\subfigure[$e_1=0.3$, $e_2=0.5$]
{ \label{fig:subfig:a}
\includegraphics[height=1.47in, width=2.2in]{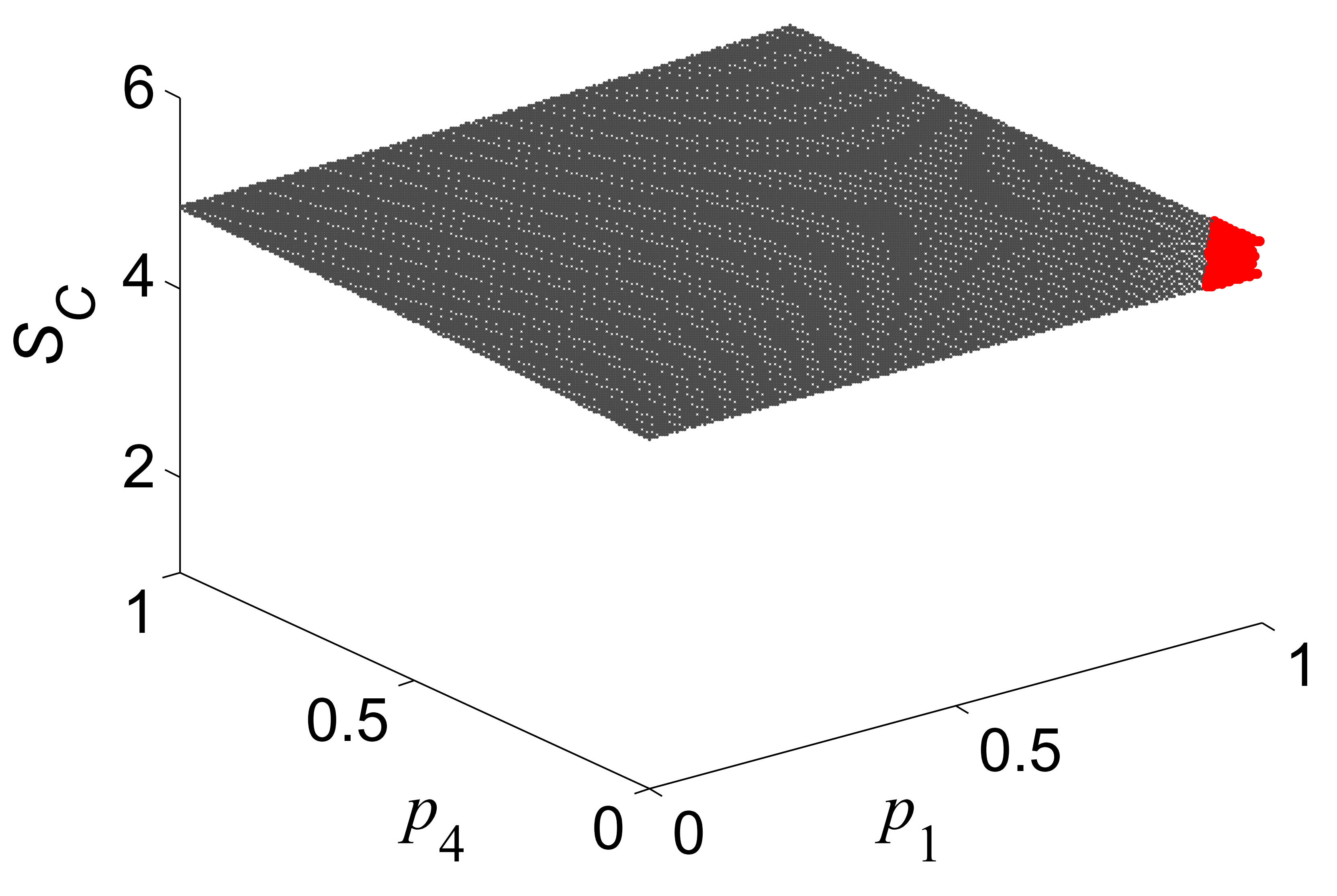}}
\subfigure[$e_1=0.5$, $e_2=0.5$]
{ \label{fig:subfig:b}
\includegraphics[height=1.47in, width=2.2in]{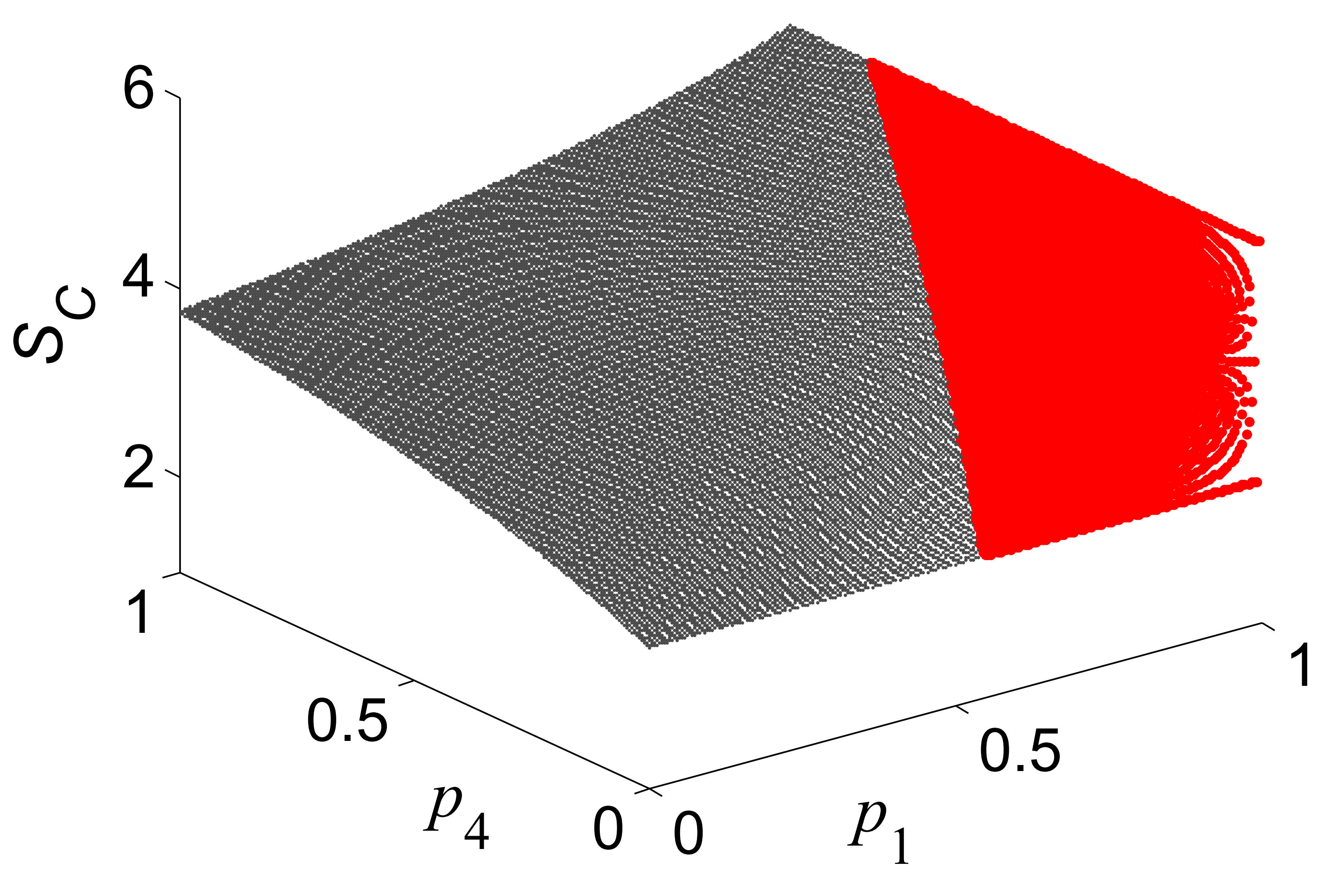}}
\subfigure[$e_1=0.5$, $e_2=0.3$]
{ \label{fig:subfig:c}
\includegraphics[height=1.47in, width=2.2in]{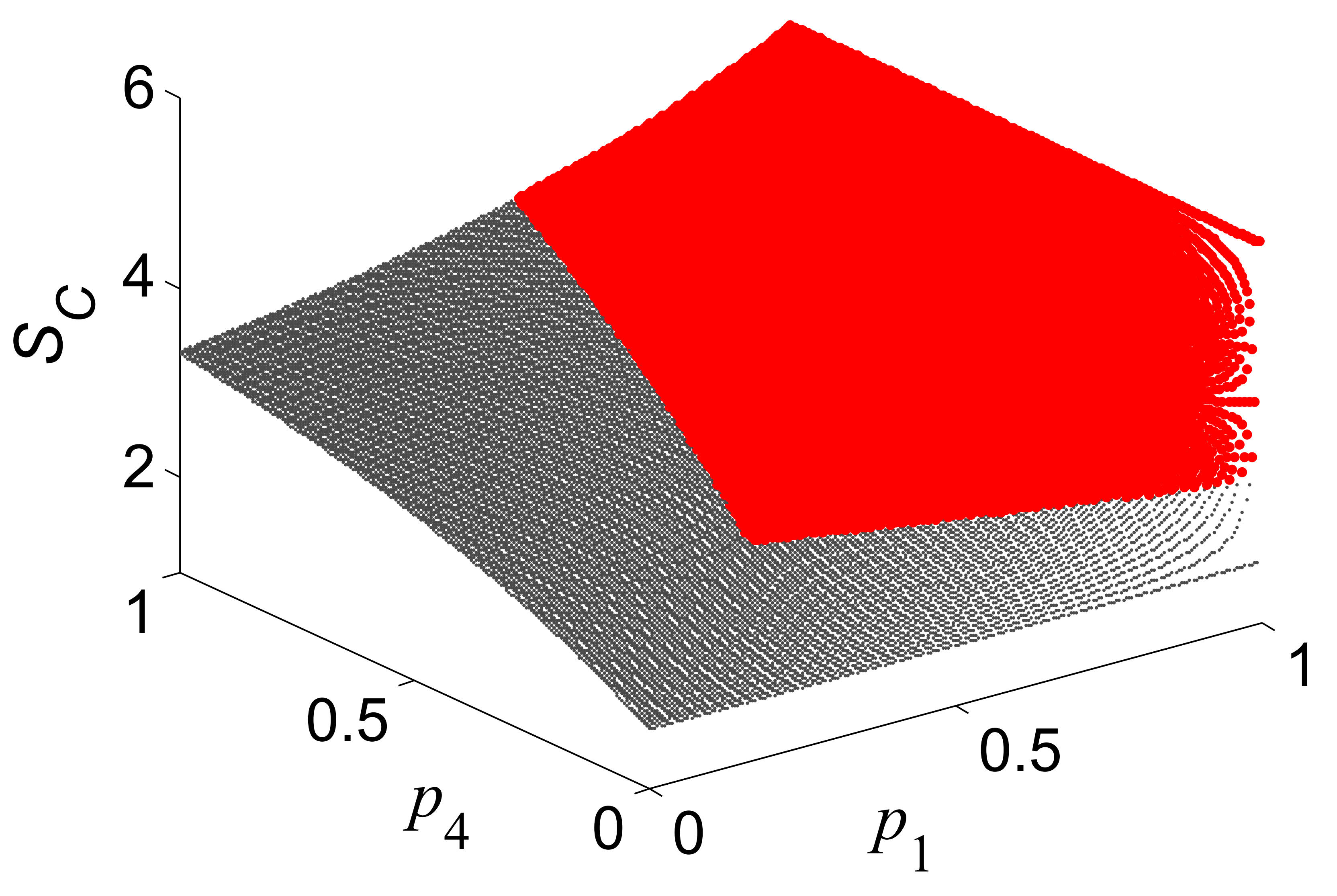}}
\caption{Impact of the data provider's strategy on the expected payoff of the data collector with $C_P=C_C=5$, $C_{P1}=C_{C1}=2$, $C_{P2}=C_{C2}=3$.} \label{fig:pinning1-1}
\end{figure*}

\begin{figure*}
\centering
\subfigure[$e_1=0.3$, $e_2=0.5$]
{ \label{fig:subfig:a}
\includegraphics[height=1.47in, width=2.2in]{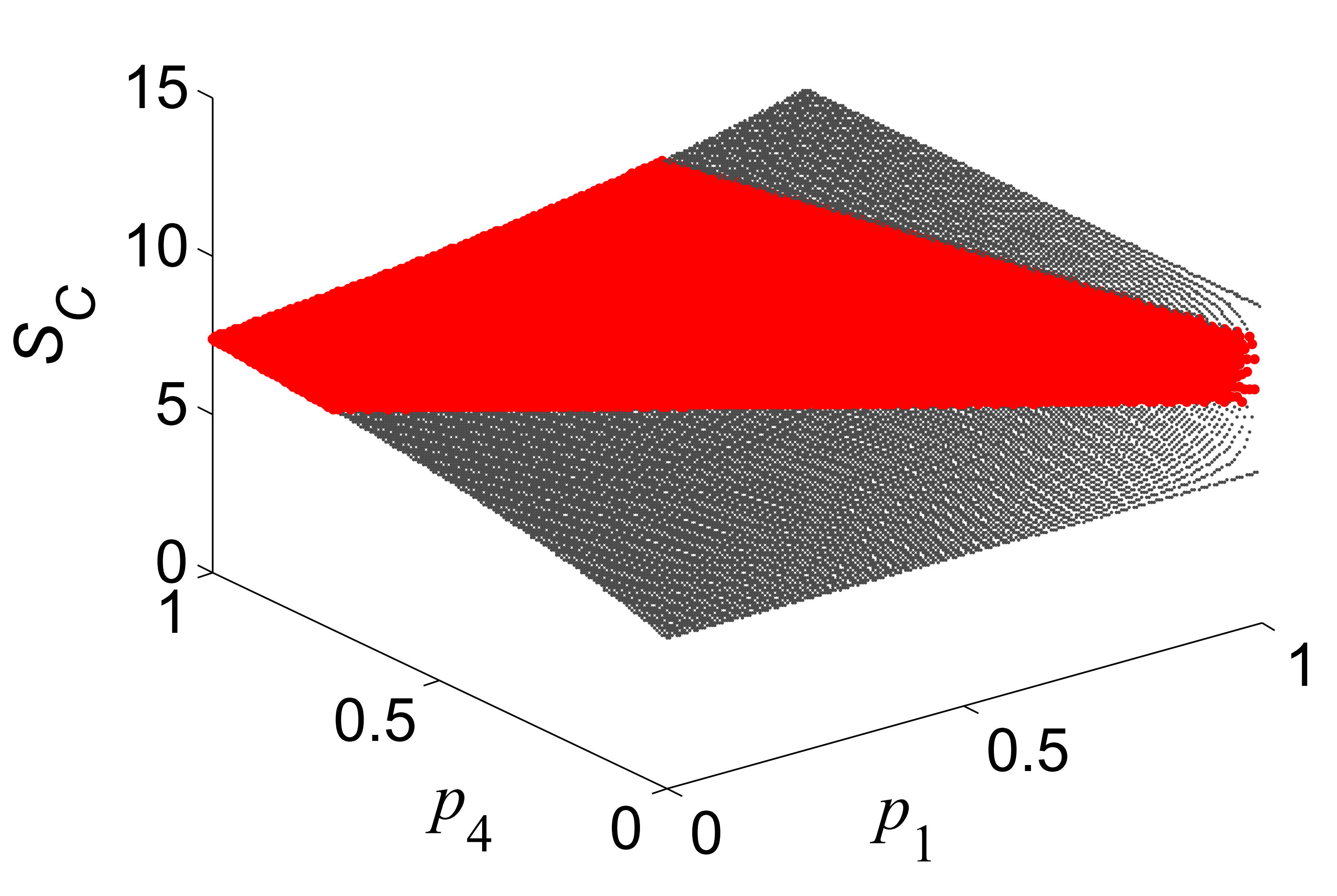}}
\subfigure[$e_1=0.5$, $e_2=0.5$]
{ \label{fig:subfig:b}
\includegraphics[height=1.47in, width=2.2in]{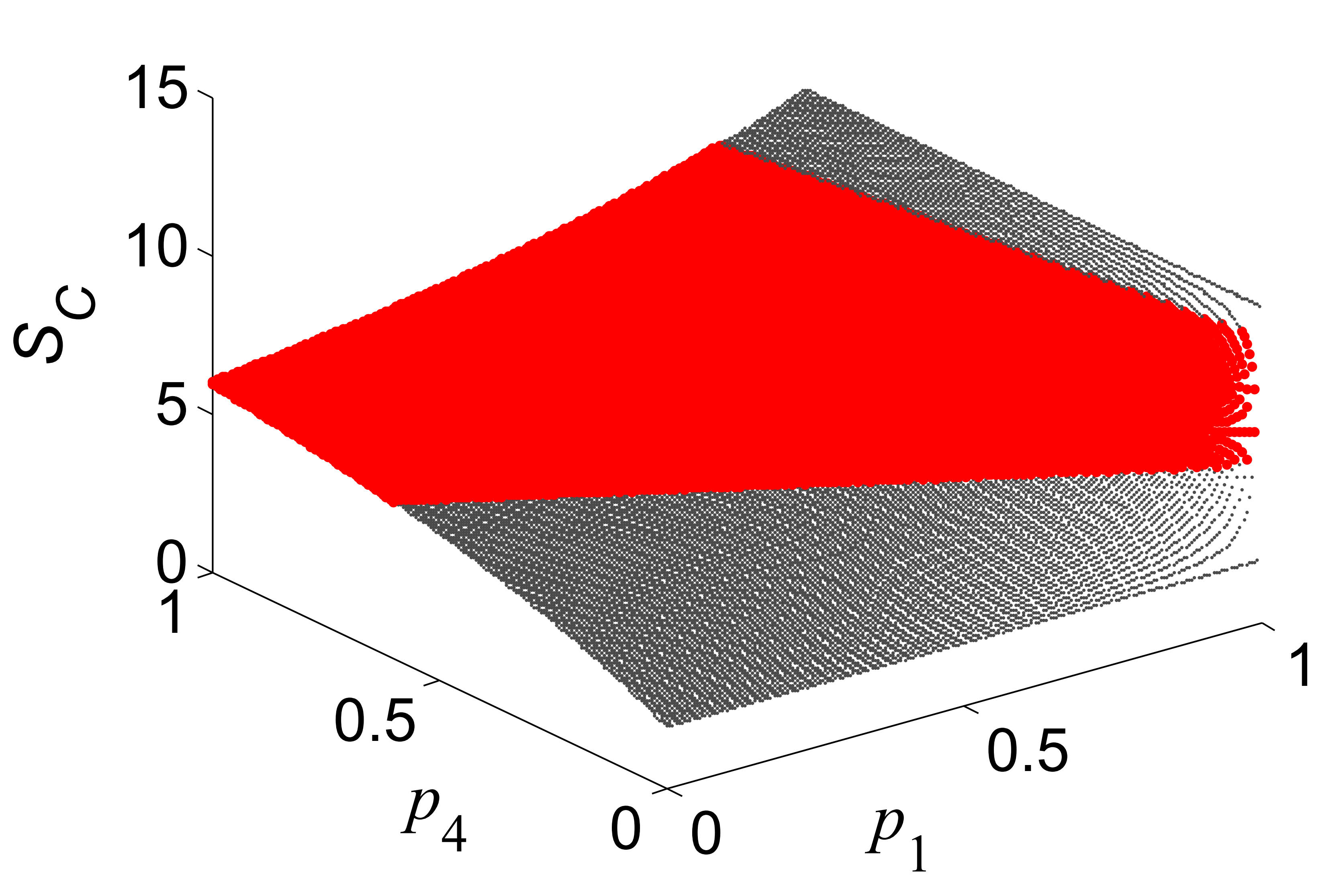}}
\subfigure[$e_1=0.5$, $e_2=0.3$]
{ \label{fig:subfig:c}
\includegraphics[height=1.47in, width=2.2in]{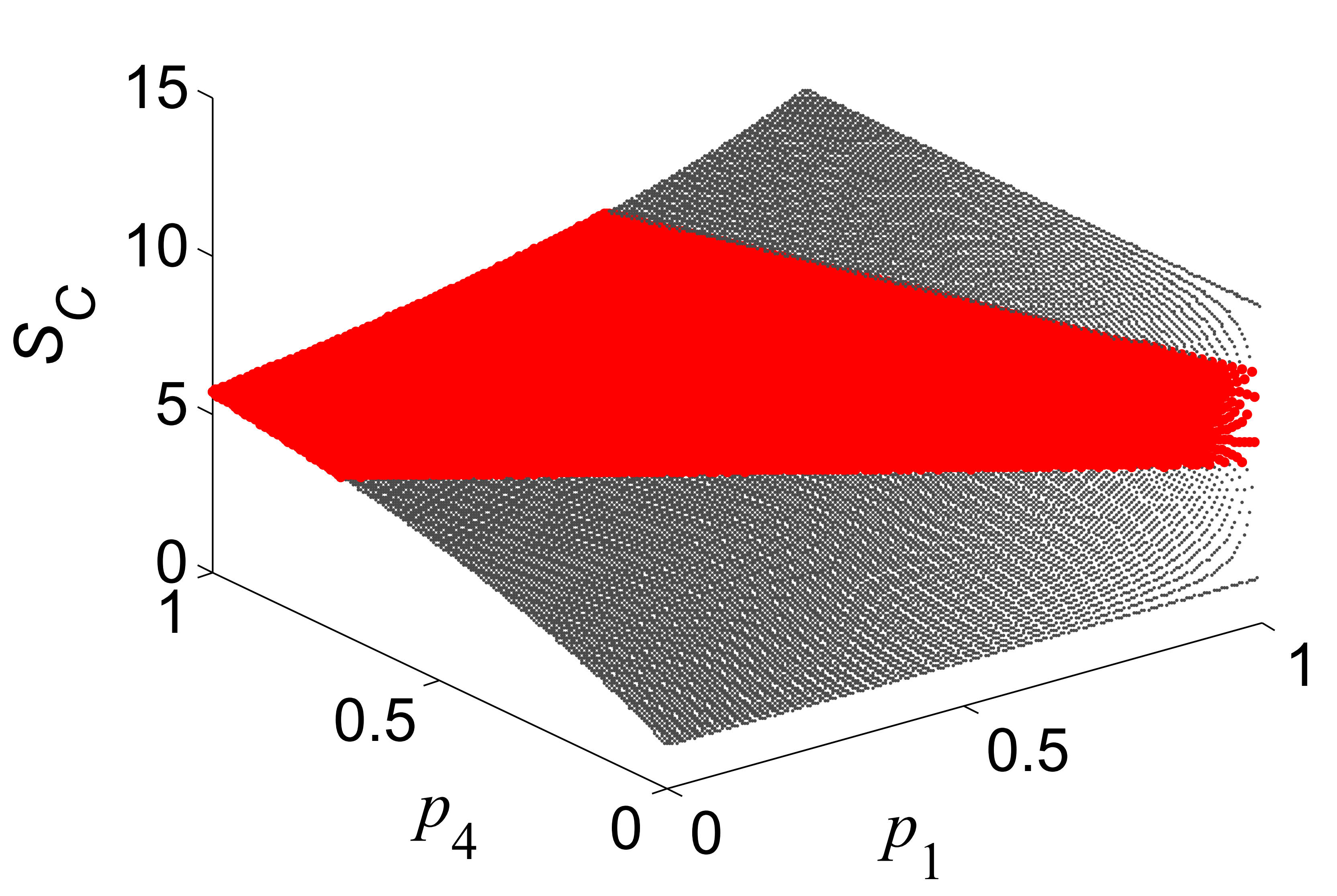}}
\caption{Impact of the data provider's strategy on the expected payoff of the data collector with $C_P=C_C=10$, $C_{P1}=C_{C1}=2$, $C_{P2}=C_{C2}=5$.} \label{fig:pinning1-2}
\end{figure*}


Being a pinning NSZD player, the data provider can unilaterally set the data collector's expected payoff. When the data collector's strategy is driven by maximizing his expected payoff, he will prefer to  the strategy that can  increase his expected payoff. Hence, taking advantage of the pinning NSZD strategy, the data provider can increase the data collector's expected payoff when he chose to cooperate and decrease his expected payoff when he defected in the previous round to incentivize the  cooperation of the data collector. Such an incentive mechanism  with the reward-punishment method is intuitive and has been well discussed in  \cite{hu2017anti}. In \cite{hu2017anti}, we formulated the interaction between the requestor and any worker in a crowdsourcing scenario  as a  prisoner's dilemma game and then  employed the ZD strategy to promote the cooperation of any malicious worker. Although the scenario of \cite{hu2017anti} is different from ours  in which the data trading  is modeled as a noisy sequential game,  once the data provider can be a pinning NSZD player, the pinning NSZD-based incentive mechanism is similar to that in \cite{hu2017anti}.  Therefore, we omit the incentive scheme to avoid redundancy.  Instead, we mainly focus on under which conditions  the data provider can adopt the pinning  NSZD strategy to set  the data collector's expected payoff, as we analyzed above.


\section{The data provider as an extortionate NSZD player}
\label{providerzd_ex}

The above pinning NSZD strategy focuses on unilaterally setting the expected payoff of the opponent while the extortionate one aims at establishing  an extortionate linear relation between
the  extortionate NSZD player and its opponent's expected payoffs. In light of \cite{press2012iterated}, \eqref{cdpzd} can be rewritten as the following form
\begin{equation}\label{eq:extortion1}
\mathbf {\hat p}=\phi[(\mathbf U_P-l_1\mathbf 1)-\chi (\mathbf U_C-l_2\mathbf 1)],
\end{equation}
where $\phi, \chi >1$ and $l_1 >0, l_2>0$ are the parameters. Under such a strategy, an extortionate share of expected payoffs larger than $l_1$ and  $l_2$ respectively can be obtained. That is,
\begin{equation}\label{extortion2}
\frac{S_P-l_1}{S_C-l_2} = \chi.
\end{equation}

According to \eqref{extortion2}, the larger $\chi$ is, the more extortionate  share of the expected payoffs the data provider can get. Hence,   $\chi$  is called {\it the extortion factor}.

\begin{theorem}
\label{theorem:exp}
When  $\phi>0$,
\begin{equation}\label{eq:extortion4}
\left\{
\begin{aligned}
\frac{U_P(DD)-l_1}{U_C(DD)-l_2}&\ge\frac{U_P(CC)-l_1}{U_C(CC)-l_2},\\
\frac{U_P(DD)-l_1}{U_C(DD)-l_2}&>1,
\end{aligned}
\right.
\end{equation}
should be satisfied and otherwise,
\begin{equation}\label{eq:extortion5}
\left\{
\begin{aligned}
\frac{U_P(CD)-l_1}{U_C(CD)-l_2}&\ge\frac{U_P(DC)-l_1}{U_C(DC)-l_2},\\
\frac{U_P(CD)-l_1}{U_C(CD)-l_2}&>1
\end{aligned}
\right.
\end{equation}
 needs to hold, so that   the data provider can adopt  the strategy $\mathbf {\hat p}=\phi[(\mathbf U_P-l_1\mathbf 1)-\chi (\mathbf U_C-l_2\mathbf 1)]$   to  enforce the extortionate share of the expected payoffs.
\end{theorem}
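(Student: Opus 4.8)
The plan is to read the candidate strategy $\mathbf{\hat p}=\phi[(\mathbf U_P-l_1\mathbf 1)-\chi(\mathbf U_C-l_2\mathbf 1)]$ as four linear identities in the cooperation probabilities $p_1,p_2,p_3,p_4$ and ask when they admit a solution with every $p_i\in[0,1]$ and a genuine extortion factor $\chi>1$. From the second column of the matrix $\mathbb M$ the four entries of $\mathbf{\hat p}$ are $p_1-1$, $e_2p_1+(1-e_2)p_2-1$, $p_3$ and $e_2p_3+(1-e_2)p_4$, so entries one and two must be $\le 0$ and entries three and four must be $\ge 0$, with additional magnitude bounds. Since $\phi>0$ is a free overall scale, those magnitude bounds can always be met by shrinking $\phi$, and (using that the $e_2$-weighted entries then sit near $0$, so the induced $p_2,p_4$ still lie in $[0,1]$) the only irreducible requirements become the four sign conditions
\begin{equation*}
\begin{aligned}
U_P(CC)-l_1&\le\chi\big(U_C(CC)-l_2\big), & U_P(CD)-l_1&\le\chi\big(U_C(CD)-l_2\big),\\
U_P(DC)-l_1&\ge\chi\big(U_C(DC)-l_2\big), & U_P(DD)-l_1&\ge\chi\big(U_C(DD)-l_2\big),
\end{aligned}
\end{equation*}
together with $\chi>1$.

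I would then divide each inequality by $U_C(xy)-l_2$, turning it into an upper or a lower bound on $\chi$, the direction flipping with the sign of $U_C(xy)-l_2$, i.e. with where the baseline $l_2$ falls inside the collector's payoff range. The orderings \eqref{eq:c1}--\eqref{eq:c2} (the collector's payoff is largest at $CD$, smallest at $DC$) and \eqref{eq:p1}--\eqref{eq:p2} (the provider's payoff is largest at $CC$, with the location of her minimum fixed by the sign of $C_P-C_{P1}$) are used to show that in each parameter regime exactly two of the four constraints dominate the other two. I expect that in one regime the active pair is $\{DD,CC\}$: $\tfrac{U_P(DD)-l_1}{U_C(DD)-l_2}$ is the smaller of the two ``$\ge\chi$'' ratios (numerator drops, denominator rises as one passes from $DC$ to $DD$) and $\tfrac{U_P(CC)-l_1}{U_C(CC)-l_2}$ is the larger of the two ``$\le\chi$'' ratios, so the system collapses to
\[
\frac{U_P(CC)-l_1}{U_C(CC)-l_2}\ \le\ \chi\ \le\ \frac{U_P(DD)-l_1}{U_C(DD)-l_2};
\]
in the complementary regime the same reasoning promotes $\{CD,DC\}$ to the active pair.

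It then remains to intersect the surviving $\chi$-interval with $(1,\infty)$: a feasible $\chi$ exists iff the interval is non-empty and extends above $1$, i.e. iff its upper endpoint is at least its lower endpoint and exceeds $1$ --- which is exactly \eqref{eq:extortion4} in the first regime and \eqref{eq:extortion5} in the second. To finish I would verify that under these conditions the two inactive constraints hold automatically, once more by the monotonicity of the numerators and denominators across states, and dispose of the boundary cases where some $U_C(xy)-l_2$ vanishes.

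The step I expect to be the main obstacle is the middle one: the sign bookkeeping for the quantities $U_C(xy)-l_2$ together with the pairwise comparison of the four ratios $\tfrac{U_P(xy)-l_1}{U_C(xy)-l_2}$, since which ratio is binding depends jointly on the signs of its numerator and denominator; the structural inequalities \eqref{eq:p1}--\eqref{eq:c2} are precisely what make these comparisons go through, and carving out the two regimes (one tied to whether $C_P>C_{P1}$ or $C_P<C_{P1}$, the other to the position of $l_2$) is where the care is required. A lighter, secondary point is the $e_2$-coupling remark above, which must be handled cleanly so that the reduction to the four raw-payoff inequalities is legitimate.
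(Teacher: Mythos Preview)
Your setup is right and in fact more careful than the paper's, but you have misread where the two cases in the theorem come from. The split in the statement is on the sign of $\phi$: ``when $\phi>0$'' gives \eqref{eq:extortion4} and ``otherwise'' (i.e.\ $\phi<0$) gives \eqref{eq:extortion5}. You instead fix $\phi>0$ throughout (``Since $\phi>0$ is a free overall scale'') and try to manufacture the second regime from the payoff ordering (the sign of $C_P-C_{P1}$) and the location of $l_2$. That cannot produce \eqref{eq:extortion5}: with $\phi>0$ and $U_C(CD)-l_2>0$ the $CD$ row of your display is always an \emph{upper} bound on $\chi$, never the lower bound $\chi\ge\frac{U_P(DC)-l_1}{U_C(DC)-l_2}$ that \eqref{eq:extortion5} requires. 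The moment you allow $\phi<0$, all four sign conditions in your display reverse simultaneously; dividing through (with the same implicit positivity of the $U_C(xy)-l_2$) then swaps which rows give upper versus lower bounds, and the paper simply reads off the $CD$/$DC$ pair to obtain
\[
\frac{U_P(DC)-l_1}{U_C(DC)-l_2}\le\chi\le\frac{U_P(CD)-l_1}{U_C(CD)-l_2},
\]
which together with $\chi>1$ is \eqref{eq:extortion5}.

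Two further remarks. First, the paper's own argument is terse: it extracts only the $CC$/$DD$ rows when $\phi>0$ and only the $CD$/$DC$ rows when $\phi<0$, without checking that the remaining two rows are automatically satisfied or addressing the sign of the denominators $U_C(xy)-l_2$; so the theorem is really presenting \emph{necessary} conditions, and the paper does not attempt the dominance/monotonicity step you sketch. Your more careful bookkeeping is a genuine addition, but it should be organized around the $\phi>0$ versus $\phi<0$ dichotomy rather than around the payoff structure. Second, your $e_2$-coupling observation (that shrinking $|\phi|$ keeps $p_2,p_4\in[0,1]$ because the convex combinations $e_2p_1+(1-e_2)p_2$ and $e_2p_3+(1-e_2)p_4$ are then close to $p_1\approx1$ and $p_3\approx0$) is correct and is exactly what lets one discard the magnitude bounds in \eqref{ecl} and keep only the sign conditions; the paper leaves this implicit.
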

\begin{proof}
The strategy to realize the  extortionate relationship, namely \eqref{eq:extortion1},  can be further  expanded as
\begin{displaymath}
\left\{
\begin{aligned}
p_1=&\phi[U_P(CC)-l_1-\chi(U_C(CC)-l_2)]+1\\
e_2p_1+(1-e_2)p_2=&\phi[U_P(CD)-l_1-\chi(U_C(CD)-l_2)]+1\\
p_3&=\phi[U_P(DC)-l_1-\chi(U_C(DC)-l_2)]\\
e_2p_3+(1-e_2)p_4&=\phi[U_P(DD)-l_1-\chi(U_C(DD)-l_2)]\\
\end{aligned}
\right..
\end{displaymath}
Since $p_1, p_2, p_3, p_4, e_1$ and $e_2$ should be within $[0,1]$,  we can get the following inequalities:
\begin{equation} \label{ecl}
\left\{
\begin{aligned}
0&\le\phi[U_P(CC)-l_1-\chi(U_C(CC)-l_2)]+1\le1\\
0&\le\phi[U_P(CD)-l_1-\chi(U_C(CD)-l_2)]+1\le1\\
&0\le\phi[U_P(DC)-l_1-\chi(U_C(DC)-l_2)]\le1\\
&0\le\phi[U_P(DD)-l_1-\chi(U_C(DD)-l_2)]\le1\\
\end{aligned}
\right..
\end{equation}
According to \eqref{ecl}, when $\phi>0$, the following inequality should hold
\begin{equation}\label{eq:extortion2}
\begin{aligned}
\frac{U_P(CC)-l_1}{U_C(CC)-l_2}\le\chi\le\frac{U_P(DD)-l_1}{U_C(DD)-l_2}.
\end{aligned}
\end{equation}
Only when $\chi>1$, can  the data provider obtain an extortionate share of expected payoffs. Hence, combining \eqref{eq:extortion2} with  $\chi>1$, we have \eqref{eq:extortion4}.
Similarly, according to \eqref{ecl},  when $\phi<0$, the following conditions need to satisfy
\begin{equation}\label{eq:extortion3}
\begin{aligned}
\frac{U_P(DC)-l_1}{U_C(DC)-l_2}\le\chi\le\frac{U_P(CD)-l_1}{U_C(CD)-l_2}.\\
\end{aligned}
\end{equation}
Combing the above condition with  $\chi>1$,  we have \eqref{eq:extortion5}.
In summary, when  $\phi>0$,  \eqref{eq:extortion4} should be satisfied and otherwise,  \eqref{eq:extortion5} needs to hold, so that $\chi>1$ and \eqref{ecl} holds, making  the data provider can adopt  the strategy $\mathbf {\hat p}=\phi[(\mathbf U_P-l_1\mathbf 1)-\chi (\mathbf U_C-l_2\mathbf 1)]$   to  enforce the extortionate share of the expected payoffs.
\end{proof}

Finally, we employ the numerical analysis to study the impact of the noises $e_1$ and $e_2$ on  $\chi$ when  $\phi>0$, which further influences whether we can find a feasible $\mathbf {\hat p}$. The 3D graphes in Fig. \ref{fig:extortion}  illustrate how $\chi$  changes with $e_1$ and $e_2$  when $C_P= C_C= 5, C_{P1}= C_{C1}= 2$, and $C_{P2}= C_{C2}=3$, under different settings of $l_1$ and $l_2$. Under each of 3D graphes, we use a 2D graph to show the impact of the noises on the feasible $\chi$, where all the feasible values of $\chi$ are projected on the $e_1-e_2$ plane as  indicated in  the red part of the graph.
In light of \eqref{ecl}, the feasible $\chi$ can lead to the feasible $\mathbf {\hat p}$,  empowering the data provider to execute an extortionate NSZD strategy.

When  the data provider adopts  an extortionate NSZD strategy,  due to $\chi>1$, the expected payoff of the data collector is positive related to that of the data  provider according to \eqref{extortion2}. In another word, if the data collector wants to maximize his expected payoff, he has to maximize that of the data provider. In light of \eqref{eq:p1} and \eqref{eq:p2}, when both players are cooperative,  implying that the state is  $CC$, the payoff of the data provider is the maximum. Hence, as a rational player who is payoff-driven, it is better for the data collector to choose cooperation.  Undoubtedly, when the data collector cooperates, the best action taken by the data provider is also $C$ to pursue  the  maximum payoff according to \eqref{eq:p1} and \eqref{eq:p2}. As a result, taking advantage of  the extortionate NSZD strategy, the data provider can dominate the game to determine the positive relationship between her expected payoff and that of the data collector, compelling the cooperation of the data collector and thus realizing a healthy data trading market where the data provider reports pure data while the collector does not resell data to third parties.

\begin{figure*}
\centering
\subfigure[$l_1=1$, $l_2=2$]
{ \label{fig:subfig:a}
\includegraphics[height=2.8in, width=2.2in]{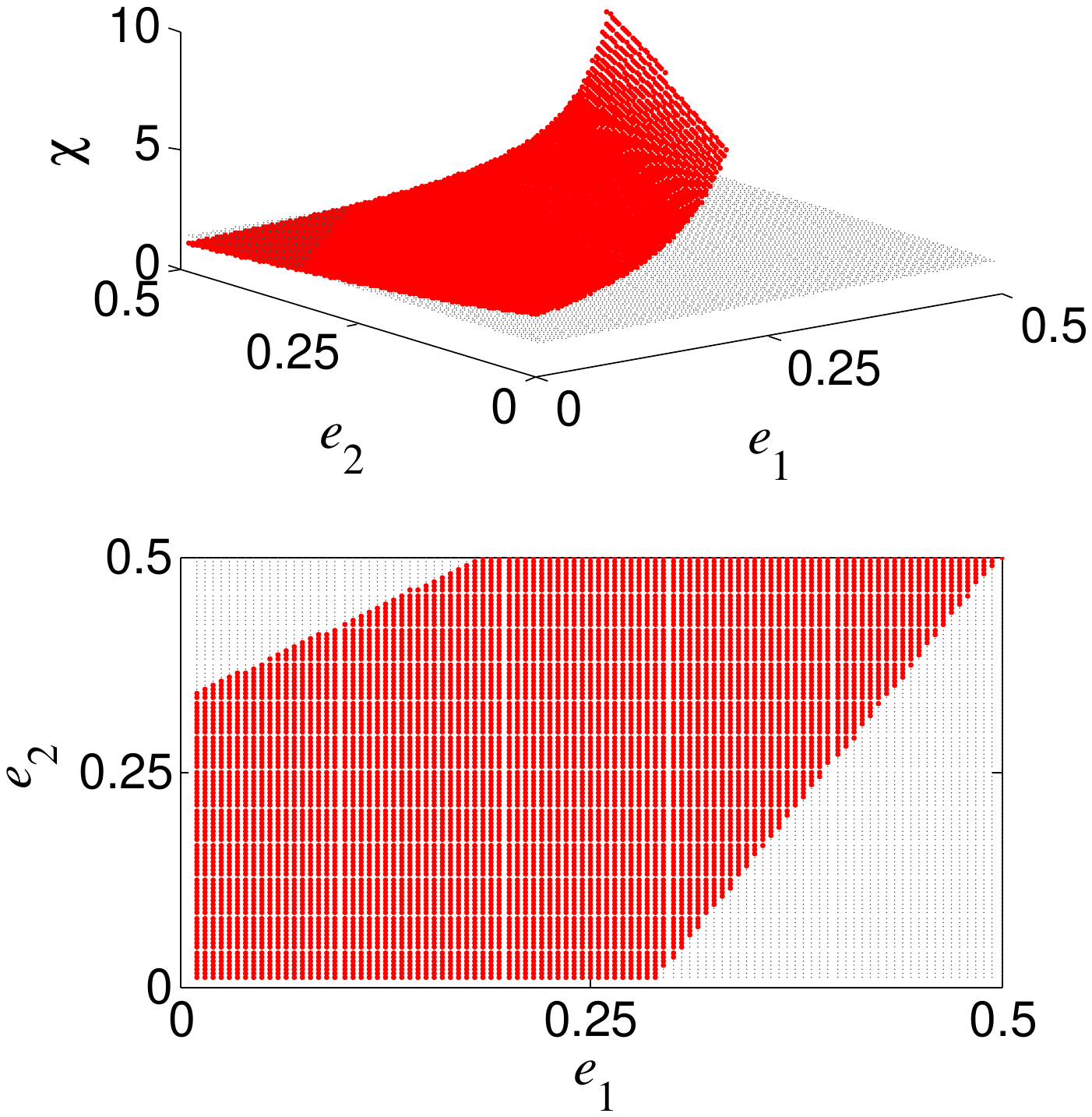}}
\subfigure[$l_1=2$, $l_2=2$]
{ \label{fig:subfig:b}
\includegraphics[height=2.8in, width=2.2in]{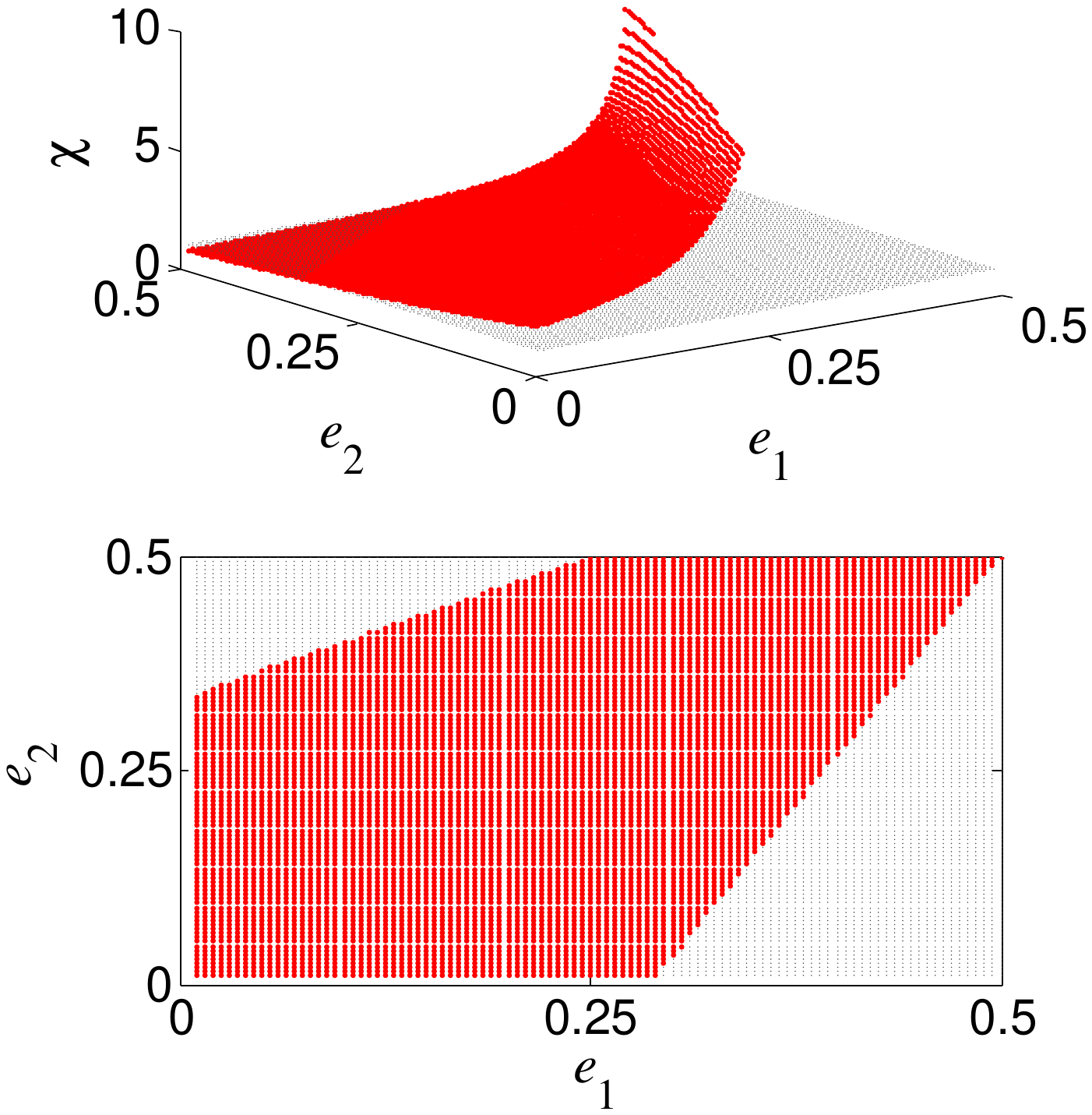}}
\subfigure[$l_1=2$, $l_2=1$]
{ \label{fig:subfig:c}
\includegraphics[height=2.8in, width=2.2in]{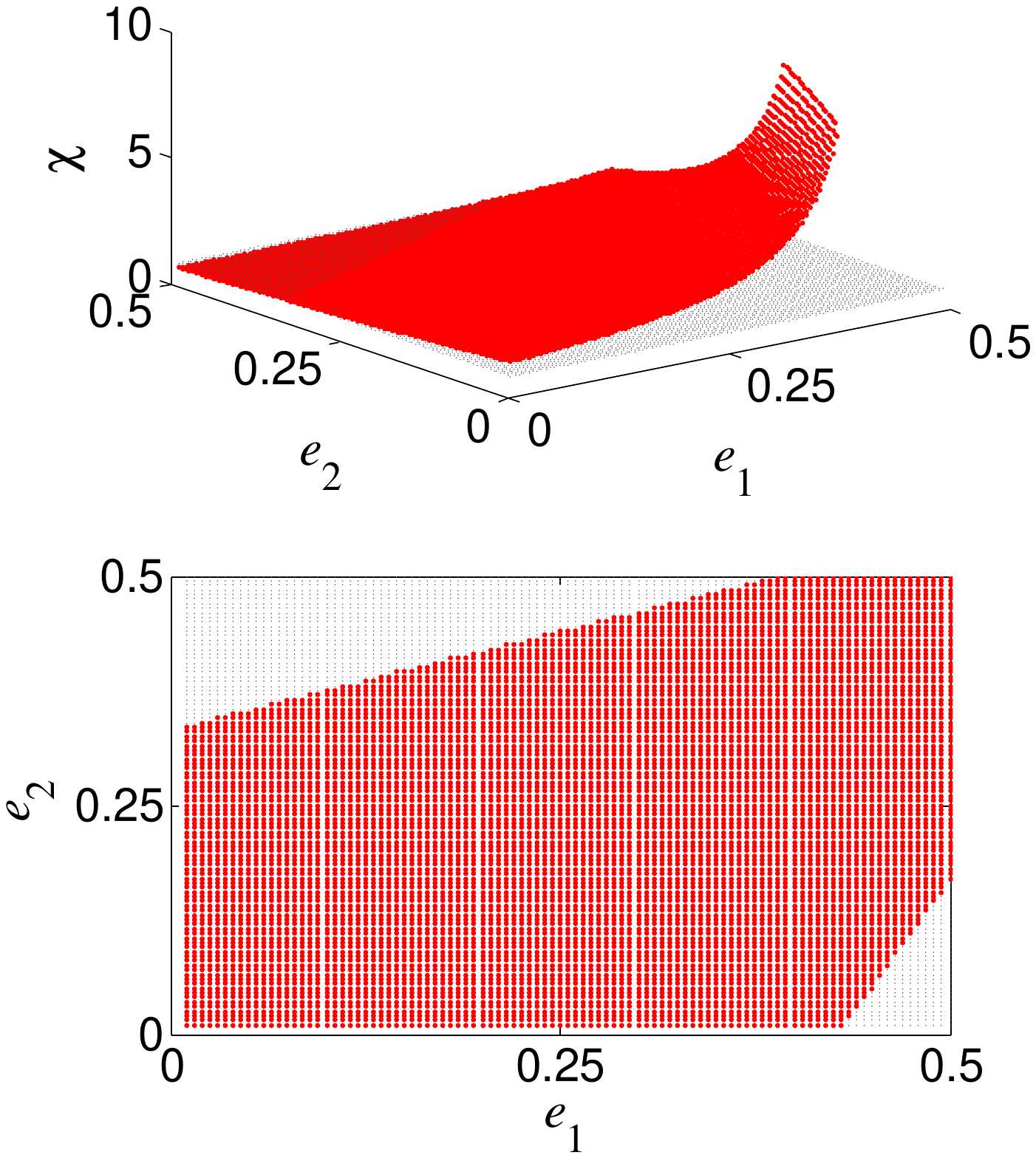}}
\caption{Impact of the noise $e_1$ and $e_2$ on the extortion factor $\chi$ with $C_P=C_C=5$, $C_{P1}=C_{C1}=2$, $C_{P2}=C_{C2}=3$.} \label{fig:extortion}
\end{figure*}


\section{The data collector as an NSZD player}\label{collectorzd}
Using the NSZD strategies, the data provider can reverse her disadvantage to dominate the game with the data collector, through which some incentive mechanisms can be applied to purify the data trading market. Such an idea also raises a concern: what if the data collector, who has already been in a dominant position, adopts the powerful tool, namely the NSZD strategies? To answer this question, in this section, we will analyze whether the data collector has the chance to be an NSZD player. That's to say, whether the data collector's strategy can make $\mathbf {\hat q}=\alpha \mathbf U_P+\beta \mathbf U_C+\gamma \mathbf 1$ hold? Through analysis, we have the following theorems.
\begin{theorem}
\label{theorem:cpinning}
The data collector cannot adopt the pinning  NSZD strategy in the data trading game.
\end{theorem}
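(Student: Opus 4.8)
The plan is to show that the linear system defining a pinning NSZD strategy for the data collector is inconsistent, by exploiting the rigidity of the column $\mathbf{\hat q}$ of the matrix $\mathbb{M}$. For the data collector to be a pinning NSZD player his strategy $(q_1,q_2)$ must satisfy \eqref{cdczd}, namely $\mathbf{\hat q}=\alpha\mathbf{U_P}+\beta\mathbf{U_C}+\gamma\mathbf 1$, with one of $\alpha,\beta$ equal to zero (so that, as in Theorem \ref{theorem:dcpinning}, one normalized payoff is fixed to $-\gamma/\beta$ or $-\gamma/\alpha$ independently of the opponent). The crucial structural observation, read off directly from the definition of $G_{vw}$ and the column operations used to build $\mathbb{M}$ in Section \ref{ZD}, is that $\mathbf{\hat q}=\big(0,\ 0,\ (1-e_1)q_1+e_1q_2-1,\ (1-e_1)q_1+e_1q_2\big)^{\mathrm T}$: the first two entries vanish because the collector observes a cooperating provider without noise, while the last two entries differ by exactly $1$. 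This is in sharp contrast to the provider's column $\mathbf{\hat p}$ in Theorem \ref{theorem:dcpinning}, whose four entries involve the four free probabilities $p_1,\dots,p_4$ and can therefore be matched to $\beta\mathbf{U_C}+\gamma\mathbf 1$.

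First I would substitute this $\mathbf{\hat q}$ into \eqref{cdczd}. The first two scalar equations become $\alpha U_P(CC)+\beta U_C(CC)+\gamma=0$ and $\alpha U_P(CD)+\beta U_C(CD)+\gamma=0$; subtracting them gives $\alpha\big(U_P(CC)-U_P(CD)\big)+\beta\big(U_C(CC)-U_C(CD)\big)=0$. In the pinning case $\beta=0$ this forces $\alpha\big(U_P(CC)-U_P(CD)\big)=0$, and since $U_P(CC)>U_P(CD)$ by \eqref{eq:p1} we get $\alpha=0$ and hence $\gamma=0$. In the remaining pinning case $\alpha=0$ it forces $\beta\big(U_C(CC)-U_C(CD)\big)=0$, and since $U_C(CD)>U_C(CC)$ by \eqref{eq:c1} we again get $\beta=0$ and $\gamma=0$. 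Either way $\alpha=\beta=\gamma=0$, so the third and fourth equations collapse to $(1-e_1)q_1+e_1q_2-1=0$ and $(1-e_1)q_1+e_1q_2=0$ holding simultaneously, which is impossible. Hence no admissible $(q_1,q_2)$ exists, and the data collector cannot adopt the pinning NSZD strategy.

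I do not anticipate a genuine obstacle here; the one point that deserves care is justifying cleanly that the first two components of $\mathbf{\hat q}$ are identically zero and that its last two components differ by exactly $1$, since everything then reduces to a two-line argument in linear algebra. I would present those two facts as immediate consequences of the construction of $\mathbb{M}$ in Section \ref{ZD} rather than re-deriving the Markov matrix, and I would emphasize that the strict inequalities \eqref{eq:p1} and \eqref{eq:c1} are precisely what rule out the degenerate escape routes in which one of $\alpha,\beta$ is zero but $\gamma\neq0$.
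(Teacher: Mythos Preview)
Your proposal is correct and follows essentially the same route as the paper: both exploit the fact that the first two components of $\mathbf{\hat q}$ vanish identically, so that the equations $\alpha U_P(CC)+\gamma=0$ and $\alpha U_P(CD)+\gamma=0$ (with $\beta=0$) are incompatible by \eqref{eq:p1}. You are somewhat more thorough than the paper---you also treat the case $\alpha=0$ using \eqref{eq:c1}, and you chase the degenerate possibility $\alpha=\beta=\gamma=0$ down to the contradiction between the third and fourth components---whereas the paper simply asserts the first two equations force $U_P(CC)=U_P(CD)$ and stops there.
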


\begin{proof}
If the data collector wants to set the opponent's expected payoff by a pinning strategy, $\mathbf {\hat q}=\alpha \mathbf U_P+\gamma$ (set $\beta =0$) should hold, which can be expanded as
\begin{displaymath}
\left\{
\begin{aligned}
0=\alpha U_P(CC)+\gamma\\
0=\alpha U_P(CD)+\gamma\\
(1-e_1)q_1+e_1q_1-1=\alpha U_P(DC)+\gamma\\
(1-e_1)q_1+e_1q_1=\alpha U_P(DD)+\gamma\\
\end{aligned}
\right..
\end{displaymath}
It's obvious that the first two equations of the equation set cannot be satisfied at the same time unless $U_P(CC)=U_P(CD)$, which  deviates from  \eqref{eq:p1}. Hence, this theorem can be proved.
\end{proof}

\begin{theorem}
\label{theorem:cpinning}
The data collector cannot adopt the extortionate NSZD strategy in the data trading game.
\end{theorem}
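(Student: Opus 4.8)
The plan is to mirror the argument that ruled out the collector's pinning NSZD strategy, the key structural fact being that the third column $\mathbf{\hat q}$ of $\mathbb{M}$ has its first two entries identically zero, which over-determines any attempted extortionate identity. First I would pin down what it means for the data collector to be an extortionate NSZD player: by analogy with \eqref{eq:extortion1} and Theorem~\ref{theorem:exp}, he would have to pick $\mathbf q=(q_1,q_2)$, a factor $\phi\neq0$, an extortion factor $\chi>1$, and baselines $l_1>0$, $l_2>0$ so that
\begin{displaymath}
\mathbf{\hat q}=\phi[(\mathbf U_C-l_2\mathbf 1)-\chi(\mathbf U_P-l_1\mathbf 1)],
\end{displaymath}
which would enforce the relation $(S_C-l_2)=\chi(S_P-l_1)$, the collector's analogue of \eqref{extortion2}.

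Next I would expand this vector identity coordinate by coordinate. Recalling from the construction of $\mathbb{M}$ that
\begin{displaymath}
\mathbf{\hat q}=(0,\ 0,\ (1-e_1)q_1+e_1q_2-1,\ (1-e_1)q_1+e_1q_2),
\end{displaymath}
the first two coordinates of the identity read $\phi[(U_C(CC)-l_2)-\chi(U_P(CC)-l_1)]=0$ and $\phi[(U_C(CD)-l_2)-\chi(U_P(CD)-l_1)]=0$. Since $\phi\neq0$, both bracketed expressions must vanish, and subtracting one from the other eliminates $l_1$ and $l_2$, leaving
\begin{displaymath}
U_C(CC)-U_C(CD)=\chi(U_P(CC)-U_P(CD)).
\end{displaymath}

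The decisive step is then a sign comparison: by \eqref{eq:p1} the factor $U_P(CC)-U_P(CD)$ is strictly positive, while by \eqref{eq:c1} the quantity $U_C(CC)-U_C(CD)$ is strictly negative, so $\chi<0$, contradicting the defining requirement $\chi>1$ of an extortionate strategy. Hence no admissible tuple $(\mathbf q,\phi,\chi,l_1,l_2)$ can exist and the data collector cannot be an extortionate NSZD player. I do not anticipate a genuine obstacle here: the derivation parallels the pinning case and is short. The only care needed is to treat the sign of $\phi$ uniformly --- every step only ever divides by $\phi$, so $\phi>0$ and $\phi<0$ are handled at once --- and to note that the degenerate choice $\phi=0$ makes $\mathbf{\hat q}$ constant and realizes no extortion at all, so it can be discarded immediately.
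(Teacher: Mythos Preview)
Your argument is correct and follows essentially the same route as the paper: both proofs exploit that the first two entries of $\mathbf{\hat q}$ vanish and then invoke \eqref{eq:p1} and \eqref{eq:c1} to obtain a contradiction. Your formulation places $\mathbf U_C$ first in the extortion identity (the paper keeps $\mathbf U_P$ first) and you derive $\chi<0$ by direct subtraction rather than by comparing the ratios $\frac{U_P(\cdot)-l_1}{U_C(\cdot)-l_2}$, which is slightly cleaner since it sidesteps any sign issues in the denominators, but the underlying idea is identical.
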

\begin{proof}
Similarly,  the following equations should be satisfied if the data collector wants to adopt an extortionate strategy.
\begin{small}
\begin{displaymath}
\left\{
\begin{aligned}
0=&\phi[U_P(CC)-l_1-\chi(U_C(CC)-l_2)]+1\\
0=&\phi[U_P(CD)-l_1-\chi(U_C(CD)-l_2)]+1\\
(1-e_1)q_1+e_1q_1-1&=\phi[U_P(DC)-l_1-\chi(U_C(DC)-l_2)]\\
(1-e_1)q_1+e_1q_1&=\phi[U_P(DD)-l_1-\chi(U_C(DD)-l_2)]\\
\end{aligned}
\right..
\end{displaymath}
\end{small}
To make the first two equations hold at the same time,  $\frac{U_P(CC)-l_1}{U_C(CC)-l_2} =\frac{U_P(CD)-l_1}{U_C(CD)-l_2}$ must be satisfied.  However, according to \eqref{eq:p1} and \eqref{eq:c1},  $U_P(CC)>U_P(CD)$ and $U_C(CC)<U_C(CD)$, implying that $\frac{U_P(CC)-l_1}{U_C(CC)-l_2} > \frac{U_P(CD)-l_1}{U_C(CD)-l_2}$. Hence,  we cannot find proper $\phi$ and $\chi$ to satisfy the first two equations above. Therefore, the data collector cannot be an extortioner in the data trading game.
\end{proof}

\section{Conclusion}\label{conclusion}

In this paper, we  study the privacy leakage issue in data trading, by taking a fresh view to make the data provider dominate the game with the collector.  To  that aim,  we set a  noisy sequential game model to depict the interaction between the data provider and its collector where both sides are allowed to take preventive measures  against each other. Based on this model, we propose NSZD strategies, which  empower  the data provider to i)  unilaterally set the expected payoff of the data collector through the pinning strategy; and ii)  enforce a positive relationship between  her and the data collector's expected payoffs by the extortionate strategy. Both strategies make  room for incentive mechanisms to stimulate the cooperation of the data collector. Through the numerical simulations, we examine the impact of key parameters and the feasible region under which the data provider can be an NSZD player. Finally, we prove that the data collector cannot be an NSZD player to take advantage of  the pinning or  extortionate  strategies for  deteriorating privacy leakage in the data trading market.



\bibliographystyle{IEEEtran}
\bibliography{reference}
\end{document}